\documentclass[a4paper, 11pt]{article}

\usepackage[english]{babel}

\usepackage{geometry} 
\newgeometry{
    top=1in,
    bottom=1in,
    outer=1in,
    inner=1in,
}

\usepackage{setspace}
\usepackage{color}

\usepackage{amsmath,amsthm}
\usepackage{amssymb}
\usepackage{graphicx}
\usepackage{subcaption}
\usepackage{braket}
\usepackage{mathtools}
\usepackage{enumitem}
\usepackage{siunitx}
\usepackage{authblk}
\usepackage{cite}

\usepackage[colorlinks=true, allcolors=blue]{hyperref}

\newcommand{\dket}[1]{\ket{#1} \rangle}
\newcommand{\dbra}[1]{\langle \bra{#1}}

\makeatletter
\newcommand{\doublewidetilde}[1]{{%
  \mathpalette\double@widetilde{#1}%
}}
\newcommand{\double@widetilde}[2]{%
  \sbox\z@{\(\m@th#1\widetilde{#2}\)}%
  \ht\z@=.9\ht\z@
  \widetilde{\box\z@}%
}

\newcommand{\map}[1]{\mathcal{#1}}
\newcommand{\smap}[1]{\widetilde{\mathcal{#1}}}

\newcommand{\lnsp}[1]{{\mathcal{#1}}}
\newcommand{\opsp}[1]{{\mathcal{L}(#1)}}
\DeclareMathOperator{\Tr}{Tr}

\newcommand{\csp}{\mathbb{C}}
\newcommand{\maptp}{\mathcal{A}} 

\newcommand{\memory}[1]{{\mathcal{M}_{#1}}}
\newcommand{\hil}[1]{{\mathcal{H}_{#1}}}
\newcommand{\lin}{\mathcal{L}}

\newcommand{\iden}{\mathrm{id}}
\newcommand{\ketbra}[1]{{\ket{#1} \bra{#1}}}
\newcommand{\tr}{\mathrm{Tr}}
\newcommand{\id}{\mathbb{I}}

\makeatother

\theoremstyle{definition}
\newtheorem{theo}{Theorem}
\newtheorem{defi}{Definition}
\newtheorem{lemm}{Lemma}

\newtheorem{rema}{Remark}

\title{Probabilistic Storage and Retrieval of Quantum Superchannels for ``Retrospective'' Intervention}

\author[1]{Wataru Yokojima}
\author[1,2]{Jisho Miyazaki}
\author[1,3]{Mio Murao}

\affil[1]{Department of Physics, Graduate School of Science, The University of Tokyo}
\affil[2]{Ritsumeikan University BKC Research Organization of Social Sciences}
\affil[3]{Trans-scale Quantum Science Institute, The University of Tokyo}

\date{\today}

\begin{document}

\maketitle

\begin{abstract}
    Storing an unknown quantum computation in a quantum state and retrieving it at a desired later time is a challenging task, hindered by the no-programming theorem of quantum computations. In the previous studies on the task of \textit{probabilistic storage-and-retrieval (pSAR) of quantum channels}, the maximum probability of exactly retrieving a single unknown unitary \textit{channel} from a quantum state in which the unknown unitary has been encoded via multiple calls to the unknown unitary channel is derived. In this work, we consider a higher-order version of pSAR, the probabilistic storage-and-retrieval of definite-causal unitary \textit{superchannels}, which are physically modeled by sequences of unitary channels with open slots where arbitrary channels can be inserted between the unitary channels for intervention. This task requires activating the ``retrospective'' intervention functionality on the superchannel, beyond its normal intervention functionality.
    We propose two protocols: \textit{partial teleportation}, which is optimal for a small number of storage queries, and \textit{staircase backstitch}, which achieves unit success probability asymptotically as the number of queries increases. We also derive a universal inversion protocol for unitary superchannels.
\end{abstract}


\section{Introduction}
A convenient feature of programmable processors is their ability to store programs and execute them at any desired time later.
The quantum extensions of programmable processors were studied in many different scenarios \cite{NPnielsenProgrammableQuantumGate1997,NPvidalStoringQuantumDynamics2002,NPzimanRealizationPositiveoperatorvaluedMeasures2005,NPbergouUniversalProgrammableQuantum2005,NPdarianoEfficientUniversalProgrammable2005,NPhilleryApproximateProgrammableQuantum2006,NPperez-garciaOptimalityProgrammableQuantum2006,ishizaka2008asymptotic,NPkubickiResourceQuantificationNoPrograming2019,NPyangOptimalUniversalProgramming2020,NPgschwendtnerProgrammabilityCovariantQuantum2021,NPgschwendtnerInfiniteDimensionalProgrammableQuantum2021}, which raised no-programming theorems, highlighting the limits of their performance.
Nevertheless, at the cost of success probability or accuracy, it is possible to write a quantum program whose length is set by the desired precision and execute the computation later at the desired time.

Whether in classical computation or quantum computation, the programmer needs to be aware of the concise description of the computation.
It is not necessarily possible to write down the entire program at once.  The agent may need to suspend computation when the computation is provided as a black box and is unknown to the agent. In quantum computation, the inherent indistinguishability of operations poses a particularly challenging problem for dealing with unknown quantum channels. Bisio \textit{et al.} \cite{bisio2010optimallearning} studied the approximate accuracy of deterministically retrieving a unitary channel from quantum states prepared by multiple calls to black-box unitary channels.

Framing the storage-and-retrieval of unknown computation in a probabilistic exact setting reveals a clear distinction from deterministic approximate settings. Probabilistic storage-and-retrieval (pSAR) protocols were recognized early on as a method to circumvent the no-programming theorem by introducing a relaxation to probabilistic success
\cite{NPnielsenProgrammableQuantumGate1997,NPvidalStoringQuantumDynamics2002}. Sedlak \textit{et al.} \cite{sedlak2019PSAR} studied the success probability in storing the action of an unknown unitary channel from multiple calls to the black box implementing the unknown unitary channel and in retrieving the exact action later.
The maximum success probability is achieved by the quantum teleportation protocol \cite{bennetetal1993teleportation,gottesmanDemonstratingViabilityUniversal1999a} when only a single use of the black box is allowed and by the port-based teleportation protocol \cite{ishizaka2008asymptotic,ishizaka2009quantum,studzinski2017port,mozrzymas2018optimal} when multiple calls to the black boxes are available. This pSAR task is then formulated for unknown computations that contain restricted classes of unitary gates \cite{sedlak2020PSARphasegate,sedlak2024twounitaries}.

In this work, we initiate the study of pSAR tasks for unknown quantum computations with \textit{intervention functionality}. This is equivalent to considering pSAR tasks for \textit{quantum superchannels}, also known as quantum networks and quantum combs \cite{chiribella2008architechture,chiribella2008transforming,chiribella2009framework}.
Quantum superchannels are a mathematical model of higher-order quantum computation, describing transformations that map quantum channels to quantum channels. Superchannels with definite causal order are physically implemented as concatenations of quantum memory channels, allowing the computing agent to intervene between successive stages \cite{chiribella2008architechture,chiribella2008transforming,chiribella2009framework}.
Unlike a mere sequence of channels applied at intervals, a quantum superchannel includes memory subsystems that are inaccessible to the agent.

What appears challenging about the pSAR of unitary superchannels is that the temporal structure of the computation must be stored in and retrieved from quantum states, which are inherently static. That is, starting the retrieval at the desired time is a part of the
task; we need the ability to pause-and-resume in the retrieval part so that other operations can intervene while paused. Therefore, the result of \cite{sedlak2019PSAR} for the channel pSAR does not simply generalize to the superchannel pSAR.

SAR protocols of quantum superchannels effectively allow the input operations on the superchannel to intervene ``retrospectively'' in the computation.
SAR protocols for quantum channels and superchannels both aim to delay the application of the unknown input state of the computation. The (probabilistic) port-based teleportation \cite{ishizaka2008asymptotic,ishizaka2009quantum,studzinski2017port,mozrzymas2018optimal} is an optimal protocol to delay the input states for channel pSAR \cite{sedlak2019PSAR}. In this context, protocols for superchannel pSAR must delay not only unknown input states but also unknown intervening operations inserted in the superchannels. Our study will demonstrate the difference between states and operations in terms of the difficulty of delaying them, in other words, retrospectively applying them. As a byproduct of our analysis of pSAR for unitary superchannels, we also present a protocol for universally inverting a unitary superchannel while preserving its intervention structure.

While quantum superchannels, in their broadest sense, encompass quantum computation over indefinite causal structures, this paper focuses on pSAR of superchannels with a definite causal structure. Among the definite-causal quantum superchannels, we particularly focus on unitary ones, that is, those constituting unitary channels with memories.
We first analytically show that the pSAR of unitary superchannels is possible with non-zero probability without error by constructing two pSAR protocols.
Maximum success probabilities are evaluated by numerical calculations for small-sized problems.

We begin in Section~\ref{sec:formalizing_pSAR} by formalizing pSAR tasks for unitary superchannels and related channels, which we refer to as \textit{unitary staircases}. In Section~\ref{sec:protocols}, we then propose two protocols for achieving the pSAR of unitary superchannels. We also present a construction of a protocol for inverting unitary superchannels. In Section~\ref{sec:probability}, we report the results of our numerical calculation of the maximum success probability of pSAR, together with a brief summary of the calculation method. Finally, we conclude in Section~\ref{sec:conclusion} and discuss several open questions.

\section{Formalizing pSAR}\label{sec:formalizing_pSAR}
\subsection{Quantum superchannels and quantum staircases}
A quantum superchannel here refers to a sequence of quantum instruments with some of their subsystems interconnected by memory channels.
Although such structures are occasionally referred to as having \textit{definite causal order}, we refer to them simply as quantum superchannels (or superchannels) in this work.

More precisely, let $\hil{k}$ be Hilbert spaces for $k=0,\ldots ,2K+1$.
A $K$-slot quantum superchannel $\smap{C}$ of type $(\hil{0},\ldots,\hil{2K+1})$ is an operation represented by a concatenation of quantum instruments $\map{C}_k :\lin(\memory{k-1} \otimes \hil{2k}) \rightarrow \lin(\memory{k} \otimes \hil{2k+1})$ for $k=0,\ldots,K$.
In this configuration, the memory spaces $\memory{k}$ between $\map{C}_k$ and $\map{C}_{k+1}$ are linked by identity operations for $k=0,\ldots,K-1$, while $\memory{-1}$ and $\memory{K}$ are defined as $\mathbb{C}$ (See Figure \ref{fig:superchannel}).
The quantum superchannel permits some interventions after the execution of $\map{C}_k$ and before the execution of $\map{C}_{k+1}$, in the free subsystem from $\hil{2k+1}$ to $\hil{2k+2}$.
These intervals between free subsystems in which the superchannel receives interventions are referred to as (open) slots of the superchannel.
The type of a quantum superchannel is often represented by the dimensions of its associated Hilbert spaces, $(\dim \hil{0},\ldots,\dim \hil{2K+1})$.
\begin{figure}[htbp]
  \centering
  \includegraphics[width=0.3\columnwidth]{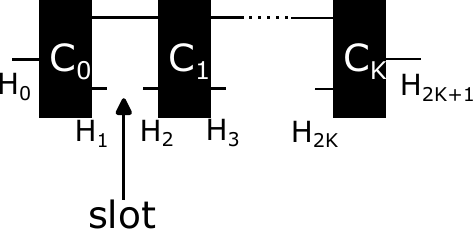}
  \caption{A $K$-slot quantum superchannel of type $(\dim \hil{0},\ldots,\dim \hil{2K+1})$ containing channels $\map{C}_1$ to $\map{C}_K$.}
  \label{fig:superchannel}
\end{figure}

The free subsystems included in the domains and codomains of any map are termed the input and output ports of the superchannel.
That is, $\hil{0},\hil{2},\ldots \hil{2K}$ are input ports and $\hil{1},\hil{3},\ldots \hil{2K+1}$ are output ports in the current example.

The target of storage-and-retrieval in this article is unitary superchannels. A unitary superchannel is a quantum superchannel that only contains unitary channels $\map{U}_k:\lin(\memory{k-1} \otimes \hil{2k}) \rightarrow \lin(\memory{k} \otimes \hil{2k+1})$. The set of all $K$-slot unitary superchannels of type $(d_1,~\ldots,d_{2K+1})$ is denoted by $\smap{U}[d_1,~\ldots,d_{2K+1}]$.

To assess the difficulty of activating the retrospective intervention functionality in the pSAR setting, we compare quantum superchannels and their reductions without the intervention functionality. We introduce \textit{quantum staircases} to represent the latter. If $\smap{C}$ is a quantum superchannel of type $(\dim \hil{0},\ldots,\dim \hil{2K+1})$, define the quantum staircase $\map{C}$ as a quantum instrument from the composition of input ports $\hil{0} \otimes \hil{2} \otimes \cdots \otimes \hil{2K}$ to that of output ports $\hil{1} \otimes \hil{3} \otimes \cdots \otimes \hil{2K+1}$, which is simply obtained by first preparing ancillary systems for input ports and then intervening with swap operations at each slot, exchanging the output port to the ancilla of input port as in Figure~\ref{fig:staircase}.\footnote{This swap exchanges systems of different sizes in general: $d_1 \times d_2 \rightarrow d_2 \times d_1$.} Since a quantum staircase is a channel, its input should be prepared simultaneously as a multipartite state, and there is no room to intervene with temporally ordered operations $\map{U}_0$ to $\map{U}_K$.
\begin{figure}[htbp]
  \centering
  \includegraphics[width=0.5\columnwidth]{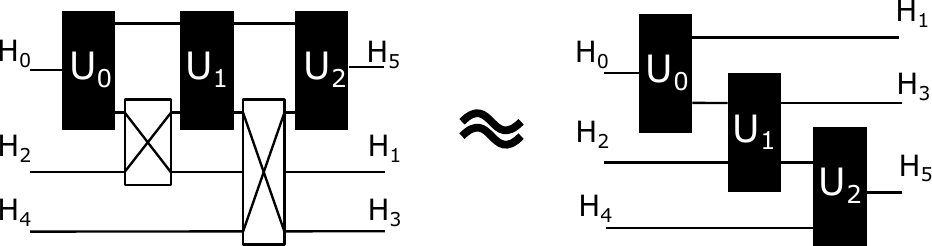}
  \caption{Introducing ancillary systems and intervening with swap operations to obtain a staircase from the ($2$-slot) superchannel.}
  \label{fig:staircase}
\end{figure}

We remove the tilde $\tilde{}$ on top of superchannels $\smap{C}, ~ \smap{D}, ~ \smap{E}$ to represent the corresponding staircases $\map{C},~\map{D},~\map{E}$. A unitary staircase refers simply to a quantum staircase constructed from a unitary superchannel. The set of all unitary staircases made from superchannels of type $(d_1,~\ldots,d_{2K+1})$ is denoted by $\map{U}[d_1,~\ldots,d_{2K+1}]$.

\subsection{Storage-and-retrieval configurations}
The pSAR task of unitary superchannels is formulated analogously to that of unitary channels. The storage and retrieval circuits are treated as quantum superchannels as shown in Figure \ref{fig:pSARsuperchannel}. In this work, we do not allow an indefinite causal structure in the storage and retrieval circuits.
A storage superchannel can accommodate $N$ identical unknown unitary superchannels $\smap{U}$ as sequences of black boxes, and produce a quantum state $\sigma_{\smap{U}}$ in the memory system $\memory{}$. A retrieval superchannel receives the state $\sigma_{\smap{U}}$ and exactly implements the original unitary superchannel $\smap{U}$ with some probability. We say a pair of storage and retrieval superchannels achieves the pSAR of unitary superchannels of type $(d_0,\ldots,d_{2K+1})$ with probability $p$ when the retrieval is successful for any unitary superchannel of type $(d_0,\ldots,d_{2K+1})$ and with the same overall probability $p$.
\begin{figure}[htbp]
  \centering
  \includegraphics[width=0.9\columnwidth]{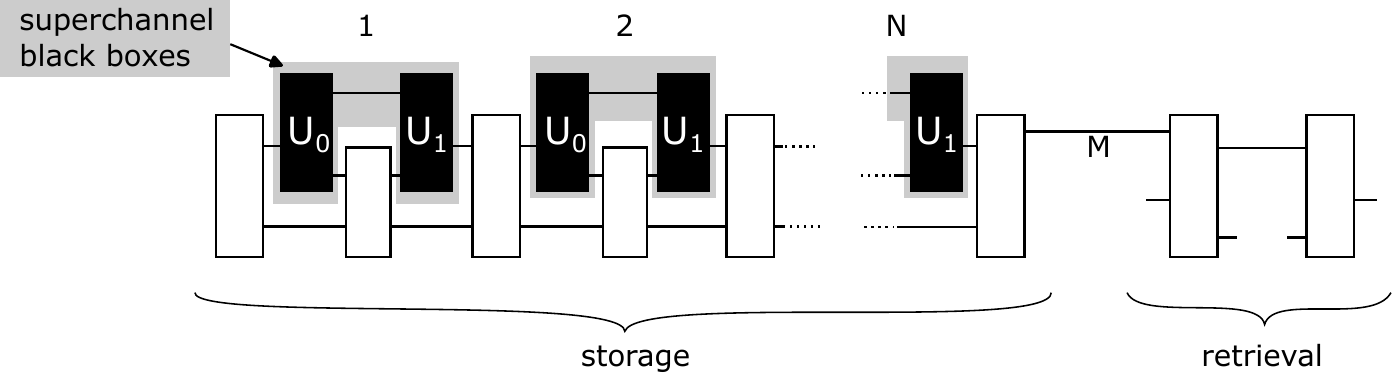}
  \caption{The storage and retrieval superchannels for pSAR of $1$-slot unitary superchannels.}
  \label{fig:pSARsuperchannel}
\end{figure}

The performance of pSAR is sensitive to the specific configuration of the storage and retrieval circuit, with several distinct paradigms under consideration. The distinction between unitary superchannels and unitary staircases as targets for pSAR is a key element of this work.

The pSAR task for unitary staircases can be formulated as a special case of the pSAR for unitary channels \cite{sedlak2019PSAR}. Specifically, the objective is to reconstruct a single unknown unitary staircase from a quantum state prepared by a storage circuit that queries the unknown staircase $N$ times. Since unitary staircases constitute a subclass of unitary channels, existing pSAR protocols for unitary channels \cite{sedlak2019PSAR} are directly applicable, though they may not be optimal for this specific subclass. Prior studies \cite{sedlak2020PSARphasegate, sedlak2024twounitaries} have demonstrated that the maximum success probability can be enhanced by restricting the pSAR target to specific subclasses of unitaries. In Section \ref{sec:probability}, we numerically evaluate the maximum success probability for the pSAR of unitary staircases.

In the pSAR protocol for unitary superchannels, the objective is to reconstruct the unknown superchannel, including its internal slots for intervention. The storage process queries a black box that implements this unknown unitary superchannel. Unlike a standard channel, a superchannel black box is characterized by its extended operation, which spans multiple computational steps until all interventions are completed. Specifically, the storage circuit treats each black-box query as an instance of a superchannel of the prescribed type. As illustrated in Figure~\ref{fig:pSARsuperchannel}, we assume that the storage circuit is provided with the full superchannel for each call.

In addition to the two configurations discussed above, we introduce the superchannel-to-staircase pSAR—a protocol designed to retrieve unitary staircases using unitary superchannels during the storage phase. The objective is to reconstruct the unitary staircase corresponding to an unknown superchannel. The storage process queries the superchannel black box following the same constraints as superchannel pSAR. Since the superchannel-to-staircase pSAR captures all features of the superchannel except for its intervention functionality, comparing it with superchannel pSAR directly isolates the inherent difficulty of enabling retrospective intervention.

A further distinction in SAR protocols concerns whether the $N$ black-box queries are performed in parallel or sequentially over time.\footnote{In a more general setting, the storage process could use the black boxes in an indefinite causal order. In this work, however, we focus on SAR circuits with a definite causal structure and leave such generalizations for future work.}
In this article, we focus on storage of the latter kind, referred to as \textit{sequential storage}, which is more general than parallel storage.
As illustrated in Figure~\ref{fig:pSARsuperchannel}, the sequential storage considered here invokes each black-box superchannel only after all interventions on the preceding one have been completed.

The superchannel pSAR is rigorously formulated as follows. Let the type of unitary superchannels be $(d_0,\ldots,d_{2K+1})$. The storage that can contain $N$ unknown unitary superchannels as sequences of black boxes is itself a superchannel of the type
\begin{equation}
    \label{eq:type_storage} (1, ~[d_0,\ldots,d_{2K+1}]_N,~\dim \memory{}),
\end{equation}
where $[d_0,\ldots,d_{2K+1}]_N$ is the shorthand for the $N$-repetition of the sequence $d_0,\ldots,d_{2K+1}$. The storage superchannel passes a unitary-dependent state $\sigma_{\smap{U}}$ on system $\memory{}$ to the retrieval part.
The retrieval part is also a superchannel of the type
\begin{equation}
    \label{eq:type_retrieval}   (d_0 \times \dim \memory{},~d_1,\ldots,d_{2K+1}).
\end{equation}
Once it receives the state $\sigma_{\smap{U}}$ on system $\memory{}$ from the storage, the retrieval superchannel should probabilistically turn into the original unitary superchannel $\smap{U}$.

The success probability and the realization of pSAR are formally defined as follows.
\begin{defi}\label{def:pSAR}
    A pair of storage superchannel $\smap{S}$ of type \eqref{eq:type_storage} and retrieval superchannel $\smap{R}$ of type \eqref{eq:type_retrieval} is defined to realize pSAR of unitary superchannels of type $(d_0,\ldots,d_{2K+1})$ with probability $p$ when
    \begin{equation}
        \label{eq:pSAR} \smap{R}(\sigma_{\smap{U}}) = p ~ \smap{U} \qquad \left( \sigma_{\smap{U}} = \smap{S}(\smap{U}, \ldots, \smap{U} ) \right),
    \end{equation}
    holds for any unitary superchannel $\smap{U} \in \smap{U}[d_0,\ldots,d_{2K+1}]$.
\end{defi}
Note that $\smap{S}(\smap{U}, \ldots, \smap{U} )$ represents the state obtained by inserting $N$ unitary superchannels $\smap{U}$ into $\smap{S}$, and $\smap{R}(\sigma_{\smap{U}})$ represents the quantum superchannel obtained by passing state $\sigma_{\smap{U}}$ at $\memory{}$ of $\smap{R}$. It is implicit in Eq.~\eqref{eq:pSAR} that the probability $p$ does not depend on $\smap{U}$ and on the other inputs of the quantum superchannel $\smap{R}(\sigma_{\smap{U}})$.

We define the maximum success probability of superchannel pSAR as
\begin{equation}
    p_{\mathrm{max},N} \left(d_0,\ldots,d_{2K+1} \right) := \max_{(\smap{S},\smap{R})} p,
\end{equation}
where the pair $(\smap{S},\smap{R})$ realizes the pSAR of unitary superchannels of type $(d_0,\ldots,d_{2K+1})$. We evaluate the maximum success probabilities by numerical calculations for problems of small sizes.

The maximum success probabilities of staircase and superchannel-to-staircase pSARs can be defined analogously to that of superchannel pSAR. Let us denote them as $p_{\mathrm{max},N}^{\map{U} \rightarrow \map{U}}$, $p_{\mathrm{max},N}^{\smap{U} \rightarrow \map{U}}$ and $p_{\mathrm{max},N}^{\smap{U} \rightarrow \smap{U}}$, respectively. Since a unitary staircase can be physically derived from its corresponding superchannel at no cost by the procedure of Figure~\ref{fig:staircase}, the following inequalities hold:
\begin{align}
    \label{eq:inequality1}  & p_{\mathrm{max},N}^{\map{U} \rightarrow \map{U}} (d_0,\ldots,d_{2K+1}) \leq p_{\mathrm{max},N}^{\smap{U} \rightarrow \map{U}} (d_0,\ldots,d_{2K+1}), \\
    \label{eq:inequality2}  & p_{\mathrm{max},N}^{\smap{U} \rightarrow \smap{U}} (d_0,\ldots,d_{2K+1}) \leq p_{\mathrm{max},N}^{\smap{U} \rightarrow \map{U}} (d_0,\ldots,d_{2K+1}).
\end{align}
Consequently, the difficulty inherent in retrospective intervention is quantitatively characterized by the performance gap between $p_{\mathrm{max},N}^{\smap{U} \rightarrow \map{U}}$ and $p_{\mathrm{max},N}^{\smap{U} \rightarrow \smap{U}}$.

\section{Protocols for superchannel pSAR}\label{sec:protocols}
In this section, we propose two protocols for pSAR of unitary superchannels. Both protocols are based on the idea of incorporating the pSAR of unitary channels as a subroutine. We first describe this idea in detail in Section \ref{subsec:climb} and then proceed to the protocols.

\subsection{Climbing up and down the hierarchy in higher-order quantum computation}\label{subsec:climb}
The protocols proposed in this work decompose the pSAR of unitary superchannels into three distinct steps:
\begin{itemize}
    \item The unknown superchannels are transformed into their corresponding unitary staircases. This procedure requires only knowledge of the superchannel type and is implemented by preparing ancillary systems and intervening at the slots of the black boxes with swap operations (see Figure~\ref{fig:staircase}).
    \item The pSAR protocol for unitary channels is invoked multiple times to store and retrieve a sufficient number of unitary staircases. This retrieval process is probabilistic and requires multiple queries to the black box channels.
    \item The retrieved unitary staircases are converted back into a single superchannel. This final step restores the intervention functionality.
\end{itemize}
Among existing channel pSAR protocols, we provide a brief review of the optimal scheme based on port-based teleportation (PBT) in Appendix \ref{sec:PBT}.
Detailed protocols for the third step are presented in the following subsections.

This idea is rooted in the hierarchy of higher-order quantum operations, as illustrated in Figure~\ref{fig:higher-order}. In this hierarchy, quantum states constitute the fundamental $0$th-order objects. Quantum channels, which map states to states, are classified as first-order objects. Quantum superchannels represent even higher-order objects, as they act upon (super)channels to transform them into other (super)channels.
\begin{figure}
    \centering
    \includegraphics[width=0.4\linewidth]{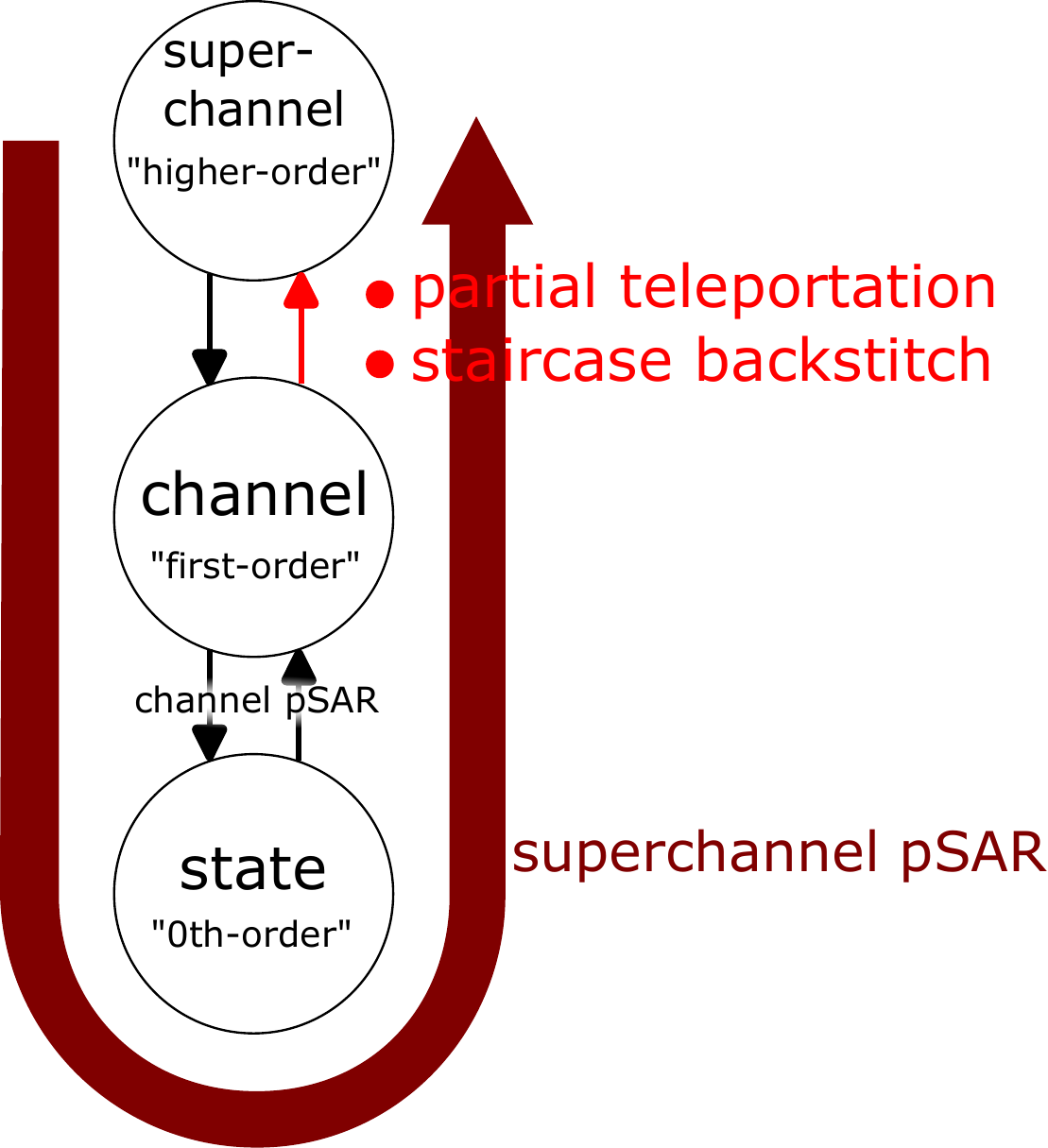}
    \caption{Hierarchy in higher-order computation and how the pSAR protocols climb it up and down.}
    \label{fig:higher-order}
\end{figure}

SAR can be understood as a procedure for encoding higher-order objects (superchannels) into $0$th-order objects (states) and subsequently recovering them. The three-step method described above realizes SAR by systematically descending and ascending the hierarchy, rather than attempting a direct leap between the higher-order and $0$th-order layers.

In our current framework, we convert a unitary superchannel $\smap{U}$ first to a staircase $\map{U}$ (a first-order object), and then to its corresponding storage state $\sigma_{\map{U}}$ ($0$th-order). Because the physical implementation of the conversion $\smap{U} \rightarrow \map{U}$ and the channel pSAR $\map{U} \rightarrow \sigma_{\map{U}} \rightarrow \map{U}$ are already established, the remaining challenge is the inverse transformation from staircases back to superchannels: $\map{U} \rightarrow \smap{U}$. This requires a mechanism to reactivate the intervention functionality, thereby lifting the first-order object back into the higher-order domain.

\subsection{Protocol 1: partial teleportation}\label{subsec:partial_teleportation}
The first protocol uses post-selected quantum teleportation to delay input, as depicted in Figure \ref{fig:partial_tele}. To obtain $\smap{U}$ from $\map{U}$, first apply $\map{U}$ to one side of the maximally entangled states $\otimes_{k \geq 2,\mathrm{even}} \Psi_{\hil{k}}$. At any time obtaining after the output subsystem $\hil{k}$ with odd $k$, the agent can interrupt with other operations and then perform the maximally-entangled measurement over the bipartite system $\hil{k+1} \otimes \hil{k+1}$, half of which is from the output of the interruption, and the other half is from the maximally entangled state $\Psi_{\hil{k+1}}$. The teleportation succeeds with probability $1/(\dim \hil{k+1})^2$. By repeating this procedure for even $k$s in the increasing order from $k=2$, the agent succeeds in the transformation $\map{U} \mapsto \smap{U}$ with the joint probability
\begin{equation}
    \Pi_{k=2,\mathrm{even}}^{2K} \frac{1}{\dim \hil{k}^2}.
\end{equation}
\begin{figure}
    \centering
    \includegraphics[width=0.5\linewidth]{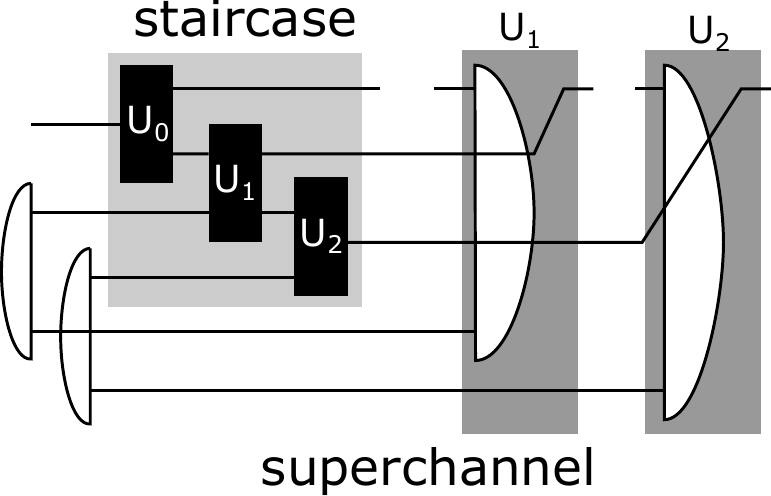}
    \caption{The channel-to-superchannel conversion protocol based on partial teleportation. The prepared states are all maximally entangled states $\ket{\Psi} := \sum_{i=1}^d \ket{ii}/ \sqrt{d}$, and all the measurements include the element $\ket{\Psi} \bra{\Psi}$ corresponding to the success branch.}
    \label{fig:partial_tele}
\end{figure}

We combine the above partial teleportation with the optimal channel pSAR protocol \cite{sedlak2019PSAR}, such as PBT (see Appendix \ref{sec:PBT} for a brief review). The combined protocol for pSAR of unitary superchannels is presented by the quantum circuit of Figure \ref{fig:pSAR_tele}. The success probability is given by
\begin{equation}
    \label{eq:p_teleportation}  p^\text{tele}_N = \frac{N}{N-1 + \dim \hil{0}^2} \times \Pi_{k=2,\mathrm{even}}^{2K} \frac{1}{\dim \hil{k}^2},
\end{equation}
where the factor $N / (N-1 + \dim \hil{0}^2)$ originates from the channel pSAR.
\begin{figure}[htb]
    \centering
    \includegraphics[width=0.5\textwidth]{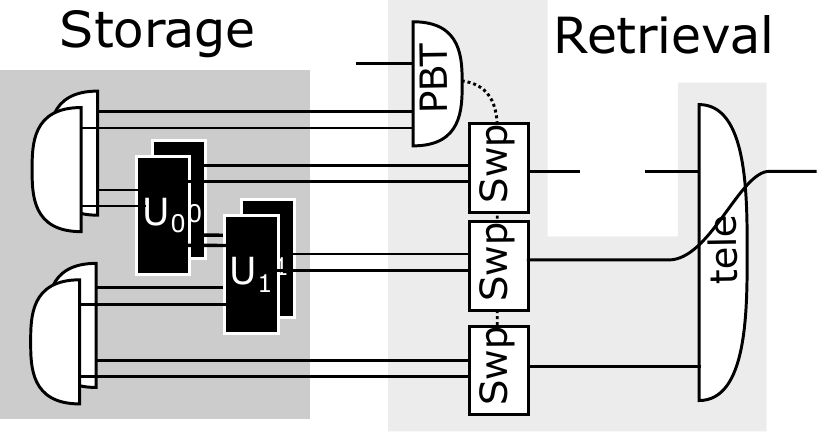}
    \caption{The pSAR protocol of unitary superchannels based on the partial teleportation and the port-based teleportation (PBT). We depict the simplest but nontrivial case $N=2$ and $K=1$. The storage circuit prepares maximally entangled states. The retrieval phase executes the measurement of PBT at the first tooth of the superchannel, and that of usual teleportation at the second tooth.}
    \label{fig:pSAR_tele}
\end{figure}

This protocol has several advantages as well as limitations.
First, only a single staircase $\map{U}$ is consumed in the conversion from $\map{U}$ to $\smap{U}$.
Second, if we employ PBT for channel pSAR, the overall protocol can be extended to pSAR for general quantum superchannels that are not necessarily unitary. This is because both PBT and partial teleportation are applicable to general channels and superchannels. On the other hand, owing to the use of partial teleportation, the success probability $p^\text{tele}N$ incurs a constant overhead $\Pi{k=2,\mathrm{even}}^{2K} \frac{1}{\dim \hil{k}^2}$, regardless of how many black boxes of the unknown superchannel are available.  In what follows, we present our second pSAR protocol, which outperforms the partial-teleportation-based protocol for large $N$.

\subsection{Protocol 2: staircase backstitch}

The second protocol utilizes a higher-order quantum transformation \cite{tarantoHigherOrderQuantumOperations2025} called \textit{unitary inversion} \cite{quintino2019reversing,yoshida2023reversing,chenQuantumAlgorithmReversing2025}. Given enough calls to the black box implementing a unitary channel $\map{U}$, we show that it is possible to construct a \textit{deterministic} circuit that implements the inverse channel $\map{U}^{-1}$ \cite{yoshida2023reversing,chenQuantumAlgorithmReversing2025}.

To simplify the discussion, we first examine the case of $1$-slot unitary superchannels.
If $\map{U}$ is a unitary staircase, the concatenation of $\map{U}$, $\map{U}^{-1}$, and again $\map{U}$ presented in Figure \ref{fig:backstitch} (a) contains the superchannel $\smap{U}$ as in Figure \ref{fig:backstitch} (b).
Therefore, it is possible to implement the staircase-to-superchannel conversion by the circuit of Figure~\ref{fig:backstitch} (a), where the inversion at the middle is implemented by the superchannel for unitary inversion, using multiple calls to the unitary staircase. 
We call this strategy to implement the channel-to-superchannel conversion \textit{staircase backstitch}.
\begin{figure}[htb]
    \centering
    \includegraphics[width=0.4\textwidth]{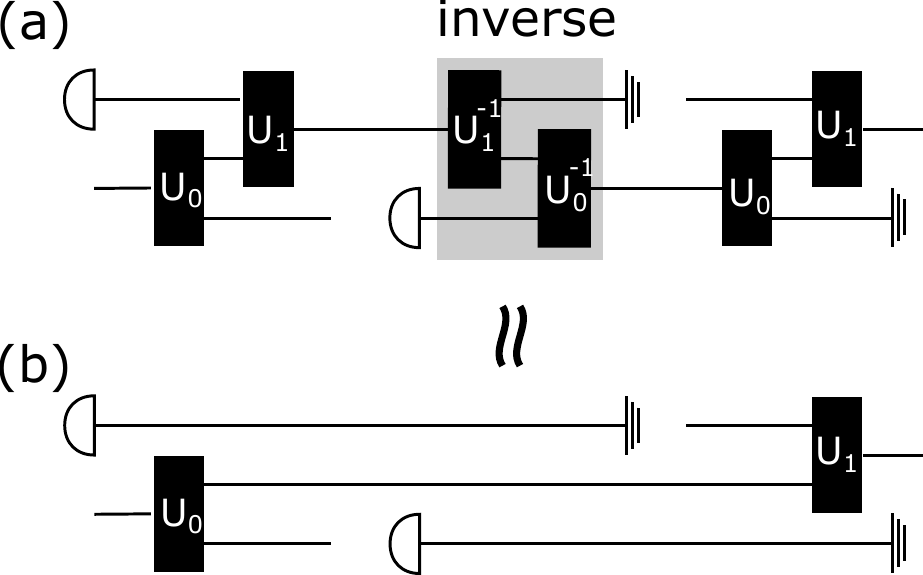}
    \caption{The channel-to-superchannel transformation for $1$-slot superchannels based on staircase backstitch. The circuit (a), consisting of a sequence of channels, is equivalent to circuit (b), which includes the corresponding superchannel. The auxiliary systems can be prepared in any state since they are isolated from the superchannel and are eventually discarded at the grounding symbols.}
    \label{fig:backstitch}
\end{figure}

We note that, in the channel-to-superchannel conversion used in the staircase backstitch protocol, the input to $\map{U}_1$ of the unitary staircase $\map{U}$ is fixed. Consequently, the protocol effectively consists of a sequence of an isometry and its inverse. Hence, it does not require unitary inversion of $\map{U}$; isometry inversion is sufficient. The inverse of an isometry from a $d$-dimensional system to a $D$-dimensional system can be implemented with the same query cost as unitary inversion on a $d$-dimensional system \cite{Yoshida2025universal}. 

The staircase backstitch protocol can be extended inductively to general $K$-slot unitary superchannels.
\begin{theo}
    There exists a quantum circuit that implements an unknown $K$-slot unitary superchannel $\smap{U}$ by alternating between a call to the unitary staircase $\map{U}$ and a call to its inverse $\map{U}^{-1}$, invoking $\map{U}$ a total of $K+1$ times and $\map{U}^{-1}$ a total of $K$ times.
\end{theo}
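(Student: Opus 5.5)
Let $\smap{U}$ consist of unitaries $U_0,\ldots,U_K$ with $U_k:\memory{k-1}\otimes\hil{2k}\to\memory{k}\otimes\hil{2k+1}$. The staircase $\map{U}$ is the unitary $\hil{0}\otimes\hil{2}\otimes\cdots\otimes\hil{2K}\to\hil{1}\otimes\cdots\otimes\hil{2K+1}$ obtained by composing $U_0,\ldots,U_K$ with every intermediate output $\hil{2k+1}$ swapped into a fresh ancilla and every input $\hil{2k+2}$ supplied from an ancilla. The proof is an induction on $K$, generalizing Figure~\ref{fig:backstitch}. The inductive claim is the following: for each $j\in\{0,\ldots,K\}$ there is a circuit using $j+1$ calls to $\map{U}$ and $j$ calls to $\map{U}^{-1}$, alternating. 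This circuit realizes the first $j$ slots of $\smap{U}$ with genuine intervention. That is, it implements $U_0,\ldots,U_j$ in order, with open slots after $U_0,\ldots,U_{j-1}$, and it leaves the memory $\memory{j}$ stored in the unused ports of the last call to $\map{U}$. Concretely, the remaining teeth $U_{j+1},\ldots,U_K$ act on dummy ancilla inputs, so the state on those output ports is an isometric image $V_j(\rho_{\memory{j}}\otimes\text{ancillas})$ with $V_j$ built from $U_{j+1},\ldots,U_K$. Taking $j=K$ gives the theorem.

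The base case $j=0$ is a single call to $\map{U}$ with the real input on $\hil{0}$ and fixed ancilla states $\ket{0}$ on $\hil{2},\ldots,\hil{2K}$. The output $\hil{1}$ is the genuine first output. The other outputs $\hil{3},\ldots,\hil{2K+1}$ encode $\memory{0}$ through $V_0$, which is the composition of $U_1,\ldots,U_K$ on $\memory{0}$ and the ancilla inputs.

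For the inductive step $j\to j+1$, the agent intervenes on the genuine output $\hil{2j+1}$, which produces the genuine input on $\hil{2j+2}$. It then applies $\map{U}^{-1}$ to the current output ports. Here it keeps the already-released genuine outputs $\hil{1},\ldots,\hil{2j+1}$ aside and feeds the stored outputs $\hil{2j+3},\ldots,\hil{2K+1}$ into the inverse, together with arbitrary dummy states on the slots of the already-released ports. The inverse is then run back as far as the tooth $U_{j+1}$. Because the staircase factorizes as a product of the $U_k$ interleaved with swaps, $\map{U}^{-1}$ undoes $U_K,\ldots,U_{j+1}$ and returns $\memory{j}$ together with the dummy ancilla that sat at $\hil{2j+2}$ and the later ones. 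This is the key step of Figure~\ref{fig:backstitch}, and it is justified because the dummy teeth $U_0,\ldots,U_j$ inside $\map{U}^{-1}$ act only on dummy systems, which can be discarded. The agent then calls $\map{U}$ again. The recovered $\memory{j}$ is routed, via the staircase's ancilla wiring, so that it meets the genuine $\hil{2j+2}$ at tooth $U_{j+1}$, and fresh ancillas feed the later teeth. This yields the genuine output $\hil{2j+3}$ and leaves $\memory{j+1}$ stored as required. The call counts grow by one $\map{U}^{-1}$ and one $\map{U}$, which gives $K+1$ and $K$ in total.

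The main obstacle is the step just glossed. The staircase is a black box on the whole tensor product, so we cannot apply "only teeth $j+1,\ldots,K$" of $\map{U}$ or $\map{U}^{-1}$. We must show that feeding $\memory{j}$ into the right ports really reaches tooth $U_{j+1}$, even though inside the box $\memory{j}$ is an internal wire. My plan is to write out the staircase explicitly as $\map{U}=S_K U_K\cdots S_1U_1S_0U_0$, with the $S_k$ denoting swaps. I will use the fact that the input on $\hil{2j+2}$ of a call to $\map{U}$ enters at tooth $j+1$, where it meets whatever memory the previous dummy teeth produced. I would therefore handle the memory differently. Rather than injecting $\memory{j}$ directly, I would run the new $\map{U}$ call on the inverse's full output: its $\hil{0},\ldots,\hil{2j}$ slots hold the outputs of the inverted dummy teeth, and $\hil{2j+2}$ holds the genuine input. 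Its first $j$ teeth then re-create exactly the configuration (dummy history $\otimes\memory{j}$) that existed before, precisely as in the 1-slot picture. Verifying this consistency (that $\map{U}^{-1}$ followed by $\map{U}$ with one input port replaced equals the true continuation) by tracking the swaps is the delicate calculation I expect to take the most care.
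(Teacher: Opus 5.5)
Your final construction is correct and gives exactly the paper's circuit. You feed the outputs of $\map{U}^{-1}$ on $\hil{0},\hil{2},\ldots,\hil{2j}$ directly into the next call of $\map{U}$, and replace only its $\hil{2j+2}$ input by the genuine one. The paper packages the same slot-by-slot induction as the recursion $\smap{U}^k \mapsto (\map{U},\map{U}^{-1},\smap{U}^{k+1})$. Here $\smap{U}^k$ is $\smap{U}$ with swaps inserted in its first $k$ slots, so its staircase is again $\map{U}$.

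A few corrections to the write-up:

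\begin{enumerate}
\item The justification in your inductive-step paragraph is false and should be dropped. Inside $\map{U}^{-1}$ the teeth $U_j^{-1},\ldots,U_0^{-1}$ act on $\memory{j}$ itself, together with the dummies on $\hil{1},\ldots,\hil{2j+1}$. So $\memory{j}$ never emerges as a free system that could be ``routed'' to tooth $U_{j+1}$. Only the version in your last paragraph works.

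\item The calculation you defer as delicate is in fact immediate. Let $\map{W}_j$ be the staircase of $(U_0,\ldots,U_j)$, a unitary channel from $\hil{0}\otimes\hil{2}\otimes\cdots\otimes\hil{2j}$ to $\hil{1}\otimes\hil{3}\otimes\cdots\otimes\hil{2j+1}\otimes\memory{j}$. Let $\map{T}_j$ be the staircase of $(U_{j+1},\ldots,U_K)$, a unitary channel from $\memory{j}\otimes\hil{2j+2}\otimes\cdots\otimes\hil{2K}$ to $\hil{2j+3}\otimes\cdots\otimes\hil{2K+1}$. Then $\map{U}=(\iden\otimes\map{T}_j)\circ(\map{W}_j\otimes\iden)$, hence $\map{U}^{-1}=(\map{W}_j^{-1}\otimes\iden)\circ(\iden\otimes\map{T}_j^{-1})$. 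It follows that:
\begin{itemize}
\item the $\map{T}_j^{-1}$ in the inverse call returns $\memory{j}$ and the ancillas;
\item the $\map{W}_j^{-1}$ that ends the inverse call is cancelled by the $\map{W}_j$ that begins the next forward call, whatever the dummy states are;
\item $\map{T}_j$ then acts on $\memory{j}$, the genuine $\hil{2j+2}$ and fresh ancillas, which re-establishes your invariant for $j+1$.
\end{itemize}

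\item It is the first $j+1$ teeth $U_0,\ldots,U_j$, not the first $j$, that rebuild $\memory{j}$.
\end{enumerate}
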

\textit{Proof}.
Let $\smap{U}^k$ denote a $(K-k)$-slot unitary superchannel obtained by inserting the swap operation into the first $k$ slots of $\smap{U}$ for $k=0,\ldots,K$.
We have $\smap{U}^0 = \smap{U}$ and $\smap{U}^K = \map{U}$. Moreover, the staircase version of $\smap{U}^k$ is equal to $\map{U}$ for any $k$.

The circuit shown in Figure~\ref{fig:recursive_backstitch} implements a $K$-slot unitary superchannel $\smap{U}$ by sequentially invoking its staircase $\map{U}$, its inverse staircase $\map{U}^{-1}$, and the $(K-1)$-slot unitary superchannel $\smap{U}^1$.
Similarly, the $K-k$-slot superchannel $\smap{U}^k$ can be implemented using its staircase $\map{U}^k = \map{U}$, its inverse staircase $(\map{U}^k )^{-1} = \map{U}^{-1}$, and the $(K-k-1)$-slot superchannel $\smap{U}^{(k+1)}$.
By recursively applying this procedure starting from $k=0$, we obtain a circuit that realizes $\smap{U}^0 = \smap{U}$ by alternating calls to $\map{U}$, used a total of $K+1$ times, and to $\map{U}^{-1}$, used a total of $K$ times.
$\qed$
\begin{figure}[htbp]
    \centering
    \includegraphics[width=0.8\textwidth]{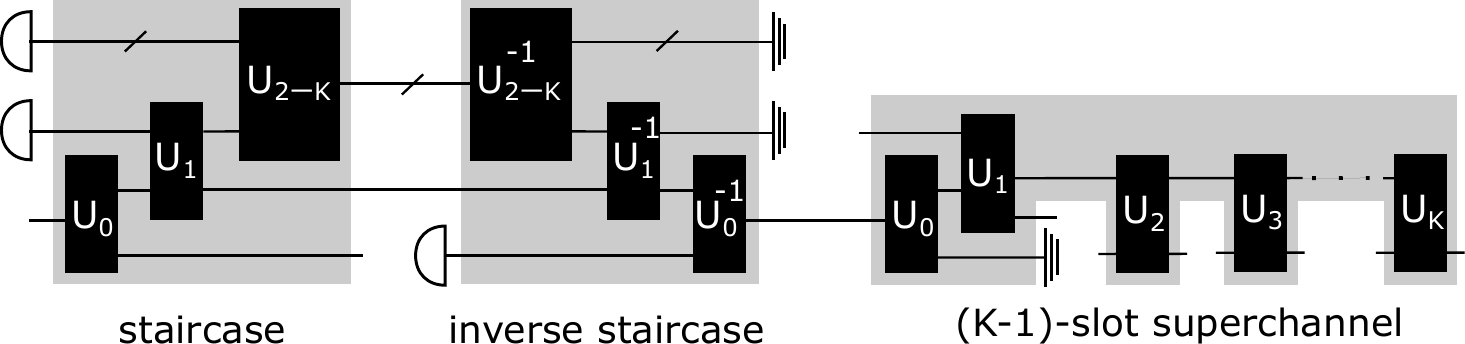}
    \caption{Circuit decomposition of a $K$-slot unitary superchannel $\smap{U}$ into its staircase $\map{U}$, its inverse $\map{U}^{-1}$, and the $(K-1)$-slot unitary superchannel $\smap{U}^1$ (as defined in the main text). While $\smap{U}$ consists of the unitaries $(U_0, U_1, \ldots ,U_K)$, the ``$U_{2-K}$'' black box denotes the staircase associated with $(U_2, \ldots ,U_K)$.}
    \label{fig:recursive_backstitch}
\end{figure}

It is remarkable that channel-to-superchannel conversion can be achieved deterministically with a finite number of channel queries, in contrast to state-to-channel conversion, which can be achieved only probabilistically with a finite number of states. By using a sufficient but constant number of unitary staircases, staircase backstitch reproduces the target unitary superchannel with certainty. This deterministic construction highlights a key advantage over partial teleportation, which, while capable of adding slots to a given unitary, is fundamentally limited by its low success probability.

A straightforward integration of staircase backstitch and channel pSAR \cite{sedlak2019PSAR} involves applying the backstitch to staircases recovered via the pSAR protocol. Because staircase backstitch succeeds deterministically, it imposes no overhead on the success probability, which remains governed by the scaling of the channel pSAR. As this probability tends to $1$ for $N \rightarrow \infty$, the integrated scheme eventually surpasses the partial teleportation method in the large-$N$ limit.

We do not explicitly derive the success probability of the integrated pSAR protocol here, as it depends on the specific method used to combine staircase backstitch and channel pSAR, leaving potential for further optimization. For instance, PBT—an optimal protocol for channel pSAR—could be tailored to retrieve the transposed staircase $\map{U}^\top$ without compromising the success probability. Subsequently, the inverse $\map{U}^{-1}$ can be obtained with fewer queries through the higher-order transformation of unitary complex conjugation, $\map{U} \mapsto \overline{\map{U}}$ \cite{miyazakiCC2019,ebler_optimal_2023}. In any case, the success probability is initially suppressed for small $N$, as both the forward and inverse channels must be invoked a finite number of times. However, it approaches unity in the asymptotic limit $N \rightarrow \infty$, consistent with the convergence of channel pSAR. The primary significance of staircase backstitch lies in proving that pSAR (and hence the retrospective intervention) can be achieved without any constant overhead on the success probability.

\begin{rema}
    The staircase backstitch protocol can be employed to demonstrate the feasibility of \textit{superchannel inversion}, a higher-order transformation that reverses the action of a given unitary superchannel. Specifically, given a finite number of calls to an unknown unitary superchannel, its inverse can be implemented deterministically and exactly.
    
    For a unitary superchannel $\smap{U}$ consisting of the sequence $(U_0, \ldots, U_K)$, its inverse $\smap{U}^{-1}$ is defined by the sequence of inverses $(U_K^{-1}, \ldots, U_0^{-1})$. Indeed, the staircase backstitch enables the realization of $\smap{U}^{-1}$ by invoking the inverse staircase $\map{U}^{-1}$ a total of $K+1$ times and the staircase $\map{U}$ a total of $K$ times. Again, both $\map{U}$ and $\map{U}^{-1}$ are obtainable from a finite number of calls to the original superchannel $\smap{U}$.
\end{rema}

\section{Maximum success probability}\label{sec:probability}
In this section, we outline the methodology and results of our analysis concerning the maximum success probabilities of pSAR for small-scale instances. While optimizing over general quantum circuits is challenging, the simplest $1$-to-$1$ pSAR can be analytically solved. The problem at hand can be reformulated as a semidefinite program (SDP) that is numerically tractable for small sizes. The analytical and numerical results are both achieved by utilizing the quantum comb formalism \cite{chiribella2008architechture,chiribella2008transforming,chiribella2009framework} in conjunction with group-theoretic techniques \cite{grinko2023gelfandtsetlinbasispartiallytransposed,Grinko_2024,grinko2025sequentialquantumprocessesgroup}.

\subsection{Comb formalism}
In the quantum comb formalism \cite{chiribella2008architechture,chiribella2008transforming,chiribella2009framework}, quantum superchannels are represented by matrices, allowing their compositions to be described through matrix operations. In the following, we provide a brief review of this formalism.

Let $\smap{N}$ be a quantum superchannel of type $(\dim \hil{0},\ldots,\dim \hil{2K+1})$, which contains only deterministic quantum channels. Let $C_{\smap{N}}$ be the Choi operator of $\smap{N}$ defined by
\begin{equation}
    C_{\smap{N}} := \map{N} \otimes_{k:even} \iden_{\hil{k}} \left( \otimes_{k:even} \Psi_{\hil{k}} \right) \in \lin(\hil{0} \otimes \cdots \otimes \hil{2K+1})
\end{equation}
where $\Psi_{\hil{k}} = \ketbra{\Psi_{\hil{k}}}$ on $\hil{k} \otimes \hil{k}$ is given by $\ket{\Psi_{\hil{k}
}} = \sum_i \ket{ii}$. This Choi operator satisfies the condition
\begin{subequations}\label{eq:comb_conditions}
    \begin{align}
    \label{eq:comb_condition1}  & \tr_{2k+1} C^{(k)} = \id_{2k} \otimes C^{(k-1)},\\
    \label{eq:comb_condition2}  & C^{(K-1)} := C_{\smap{N}}, \qquad C^{(1)} = 1,
\end{align}
\end{subequations}
where $C^{(k)}$ are recursively defined by \eqref{eq:comb_condition1}.
In contrast, any positive semi-definite operator $C$ acting on $\hil{0} \otimes \cdots \otimes \hil{2K+1}$ that satisfies the aforementioned conditions is the Choi operator of a deterministic quantum superchannel of type $(\dim \hil{0},\ldots,\dim \hil{2K+1})$. Such operators are formally referred to as deterministic quantum combs.

Furthermore, a positive semi-definite operator $C_p$ on the same space represents a probabilistic quantum superchannel (i.e., a component of a quantum instrument) if and only if there exists a deterministic quantum comb $C$ such that
\begin{equation}
    \label{eq:probabilistic_comb_condition} C_p \leq C.
\end{equation}
In this case, $C_p$ is termed a probabilistic quantum comb.

The composition of two quantum superchannels is represented by the link product. Let $\smap{N}_1$ and $\smap{N}_2$ be two quantum superchannels such that they can be connected at ports in the subset $J$. When we denote the composed superchannel by $\smap{N}_1 \circ_J \smap{N}_2$, the link product $\star$ is defined so that
\begin{equation}
    C_{\smap{N}_1 \circ_J \smap{N}_2} = C_{\smap{N}_1} \star C_{\smap{N}_2},
\end{equation}
holds.

\subsection{pSAR formalized by comb}
Note that the Choi operator $C_{\smap{N}}$ defined from a superchannel is mathematically indistinguishable from the operator $C_{\map{N}}$ defined from its corresponding staircase. Therefore, to analyze pSAR within the comb formalism, we must distinguish between these configurations by imposing specific constraints on the storage ($\smap{S}$) and retrieval ($\smap{R}$) operations.

To this end, it is convenient to reformulate the pSAR condition (Def.~\ref{def:pSAR}) in terms of an integrated SAR superchannel $\smap{L}$, formed by the composition of $\smap{S}$ and $\smap{R}$ via the memory system $\memory{}$.
The entire SAR superchannel $\smap{L}$ has the type
\begin{equation}
    \label{eq:type_L}   (1,[d_0,\ldots,d_{2K+1}]_N, 1, d_0,\ldots,d_{2K+1}).
\end{equation}
\begin{defi}\label{def:pSAR_comb}
    A probabilistic quantum comb $L$ of type \eqref{eq:type_L} is defined to realize pSAR of type $(d_0,\ldots,d_{2K+1})$ with probability $p$ when the following holds:
    \begin{equation}
        \label{eq:pSAR_comb} L \star C_{\smap{U}}^{\otimes N} = p ~C_{\smap{U}}, \qquad (^\forall \smap{U} \in \smap{U}[d_0,\ldots,d_{2K+1}]).
    \end{equation}
\end{defi}

The primary objective of this reformulation is to determine the maximum success probability, $p_\mathrm{max}$, by treating $L$ as the optimization variable.
The constraints \eqref{eq:comb_conditions} and \eqref{eq:probabilistic_comb_condition} defining the SAR superchannel type can be readily cast as an SDP.
However, the pSAR condition \eqref{eq:pSAR_comb} involves an infinite number of constraints, making it difficult to handle both analytically and numerically.
As described in the following subsection, we address this by leveraging the symmetry properties of the problem to transform the pSAR condition into a more tractable form for calculating $p_\mathrm{max}$.

\subsection{Unitary-equivalence on SAR combs}
Unitary-equivalence is a symmetry under the action of collective unitaries with their conjugates $U^{\otimes n} \otimes \bar{U}^{\otimes m}$. Operators that commute with this operator have a specific decomposition due to the mixed Schur-Weyl duality theorem.

The unitary equivalence in $L$ is observed in the following. Let the type of unitary superchannel be given by $(\dim \hil{0},\ldots,\dim \hil{2K+1})$. For $l = 0, \ldots ,2K+1$, let $P_l$ be the set of $N$ subsystems of $L$ that are connected to the $l$-th port of the superchannel. Additionally, let $\hil{l}^\mathrm{R}$ be the subsystem of $L$ corresponding to the $l$-th port of the retrieved unitary. For unitary $V_l$ on $\hil{l}$ define
\begin{equation}
    g_l(V):= \left( \otimes_{\hil{} \in P_l} V_{\hil{}} \right) \otimes \bar{V}_{\hil{l}^\mathrm{R}} \otimes \id,
\end{equation}
where the identity operator belongs to the remaining subsystems of $L$.
\begin{lemm}\label{lem:symmetry}
    Suppose that a probabilistic comb $L$ realizes the pSAR of unitary superchannels of type $[d_0,\ldots,d_{2K+1}]$ with probability $p$. Then there is a pair of deterministic comb $L^\mathrm{det}_\mathrm{sym}$ and probabilistic comb $L_\mathrm{sym} \leq L^\mathrm{det}_\mathrm{sym}$ that realizes the same pSAR with the same probability $p$, and additionally satisfies
    \begin{equation}
        \label{eq:commutation}  \left[ L^\mathrm{det}_\mathrm{sym} , g_l(V) \right]=0, \quad \left[ L_\mathrm{sym} , g_l(V) \right] =0 \qquad (\forall V: \text{unitary on } \hil{l}, \quad l=0,\ldots ,2K+1).
    \end{equation}
\end{lemm}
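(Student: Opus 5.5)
We prove Lemma~\ref{lem:symmetry} by twirling over the symmetry group of the pSAR condition. Write $G=U(d_0)\times\cdots\times U(d_{2K+1})$ with normalized Haar measure $dV$, and for $\mathbf{V}=(V_0,\ldots,V_{2K+1})\in G$ set $g(\mathbf{V}):=\prod_{l} g_l(V_l)$. The $g_l$ act on disjoint subsystems, so they commute and $g$ is a unitary representation of $G$. Let $L^\mathrm{det}$ be a deterministic comb with $L\le L^\mathrm{det}$, as guaranteed by \eqref{eq:probabilistic_comb_condition}. We define
\begin{equation*}
L_\mathrm{sym}:=\int_G g(\mathbf{V})\, L\, g(\mathbf{V})^\dagger\, d\mathbf{V},\qquad L^\mathrm{det}_\mathrm{sym}:=\int_G g(\mathbf{V})\, L^\mathrm{det}\, g(\mathbf{V})^\dagger\, d\mathbf{V}.
\end{equation*}
Left invariance of the Haar measure gives \eqref{eq:commutation} at once. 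Conjugation is positive and preserves order, so $0\le L_\mathrm{sym}\le L^\mathrm{det}_\mathrm{sym}$. Three things remain to be shown.

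\textbf{(i) $L^\mathrm{det}_\mathrm{sym}$ is a deterministic comb of type \eqref{eq:type_L}.} Each $g(\mathbf{V})$ is a tensor product of local unitaries, one on each port of $L$. The conditions \eqref{eq:comb_conditions} are linear and built from partial traces and tensoring with identities. Conjugating by a local unitary on a traced-out output port $2k+1$ leaves the partial trace invariant on the remaining systems, and conjugating $\id_{2k}$ on an input port by a local unitary leaves it equal to $\id_{2k}$. So $g(\mathbf{V})L^\mathrm{det}g(\mathbf{V})^\dagger$ is a deterministic comb for every $\mathbf{V}$. Physically, this amounts to pre- and post-composing the SAR superchannel with local unitaries on its ports. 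Since the set of deterministic combs is convex and closed, the Haar average stays in it.

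\textbf{(ii) $L_\mathrm{sym}$ still realizes pSAR with probability $p$.} This is the central computation. For a unitary superchannel $\smap{U}=(U_0,\ldots,U_K)$ and local unitaries $V_l$ on the free ports, the superchannel $\smap{U}'$ obtained by composing each $U_k$ with $V_{2k}^{(\ast)}$ on its input port and $V_{2k+1}^{(\ast)}$ on its output port (with the appropriate transposes or conjugates) is again a unitary superchannel of the same type. Its Choi operator is $C_{\smap{U}'}=W C_{\smap{U}} W^\dagger$, where $W$ is a product of $V_l$ or $\bar V_l$ on the ports. We will fix the convention in $g_l$ (plain $V$ on the $N$ stored copies, $\bar V$ on the retrieved port) so that the link product identity holds:
\begin{equation*}
\bigl(g(\mathbf{V})^\dagger L g(\mathbf{V})\bigr)\star C_{\smap{U}}^{\otimes N}
= \overline{W}^{\,\dagger}\bigl(L\star C_{\smap{U}'}^{\otimes N}\bigr)\overline{W}.
\end{equation*}
The identity itself follows from $\Tr[(A\otimes \id)(X^{T}\otimes \id)\cdots]$-type manipulations: a local unitary sitting on a linked port of $L$ moves across the link to the $C_{\smap{U}}$ side as its transpose. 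The right-hand side equals $p\,\overline{W}^{\,\dagger} C_{\smap{U}'}\overline{W}=p\,C_{\smap{U}}$, because the pSAR condition \eqref{eq:pSAR_comb} holds for every unitary superchannel, including $\smap{U}'$. Integrating over $\mathbf{V}$ by linearity of the link product then gives $L_\mathrm{sym}\star C_{\smap{U}}^{\otimes N}=p\,C_{\smap{U}}$.

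\textbf{(iii) Main obstacle: the conjugation bookkeeping.} The step most likely to need care is getting the $V$ versus $\bar V$ conventions right, since input and output ports carry opposite conjugations in the Choi operator of a unitary superchannel. The internal memory unitaries are untouched and pose no difficulty, because only the free ports are rotated. I would first verify the identity for $K=0$, $N=1$ and then note that it factorizes port by port.
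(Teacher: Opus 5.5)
Your proposal is correct and is essentially the paper's proof. Both Haar-twirl $L$ and $L^\mathrm{det}$, use that local-unitary conjugation preserves the comb conditions and the order $L\le L^\mathrm{det}$, and check the pSAR condition by pushing the $V$'s on the stored ports through the link product onto $C_{\smap{U}}$ (yielding another unitary superchannel), then undoing this with the $\bar V$ on the retrieved port. The paper twirls port by port and treats even and odd $l$ separately, whereas your single twirl over the full product group makes the joint commutation relations automatic.

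The one slip is in your bookkeeping. Take $W:=\bigotimes_l V_l$ and the conjugation $g(\mathbf{V})^\dagger L\, g(\mathbf{V})$. Then the modified superchannel satisfies $C_{\smap{U}'}=\overline{W}C_{\smap{U}}\overline{W}^{\dagger}$, not $WC_{\smap{U}}W^{\dagger}$, and this is exactly what gives $\overline{W}^{\dagger}C_{\smap{U}'}\overline{W}=C_{\smap{U}}$ in your displayed identity.
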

\textit{Proof sketch}.
$L_\mathrm{sym}$ and $L^\mathrm{det}_\mathrm{sym}$ can be constructed from the original $L$ and $L^\mathrm{det}$ by twirling. The operators $g_l(V) L g_l(V)^\dagger$ and $g_l(V) L^\mathrm{det} g_l(V)^\dagger$ can be shown to satisfy all the pSAR conditions with probability $p$, for any $V$ and $l$. Their Haar randomized versions
\begin{equation}
    L_{\mathrm{sym},l} := \int_{\mathrm{SU}(d_l)} dV g_l(V) L g_l(V)^\dagger, \qquad  L^\mathrm{det}_{\mathrm{sym},l} := \int_{\mathrm{SU}(d_l)} dV g_l(V) L^\mathrm{det} g_l(V)^\dagger,
\end{equation}
also realize pSAR with probability $p$. Additionally, the Haar randomized versions satisfy the commutation relations \eqref{eq:commutation} for $l$. We can thus construct $L_\mathrm{sym}$ and $L^\mathrm{det}_\mathrm{sym}$ by applying the Haar randomizations to $l=0,\ldots,2K+1$.
$\qed$
\begin{lemm}\label{lem:identity_is_enough}
    Suppose $K=1$. Let $L$ be a probabilistic comb $L$ of the same type as the SAR comb. If $L$ satisfies the commutation relation
    \begin{equation}
        \label{eq:commutation2}  \left[ L , g_l(V) \right]=0, \qquad (\forall V: \text{unitary on } \hil{l}, \quad l=0,\ldots ,2K+1),
    \end{equation}
    then it realizes pSAR with probability $p$ if and only if
    \begin{equation}
        \label{eq:pSAR_comb2} L \star C_{\widetilde{\iden}}^{\otimes N} = p ~C_{\widetilde{\iden}},
    \end{equation}
    where $\widetilde{\iden}$ is the identity superchannel in $\smap{U}[d_0,\ldots,d_{3}]$.
\end{lemm}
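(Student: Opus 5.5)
The "only if" direction is immediate, because the identity superchannel $\widetilde{\iden}$ (all $U_k$ trivial, with memories $\memory{0}$ of the appropriate size) belongs to $\smap{U}[d_0,\ldots,d_3]$. For the converse I would use the covariance of $L$ to move an arbitrary $\smap{U}$ to the identity. For $K=1$, a unitary superchannel consists of $U_0:\hil{0}\to\memory{0}\otimes\hil{1}$ and $U_1:\memory{0}\otimes\hil{2}\to\hil{3}$. The first step is to reduce to a canonical memory size. Since $U_0$ and $U_1$ are unitaries, $d_0\dim\memory{0}^{-1}\cdot$ fixes $\dim\memory{0}=d_0/d_1=d_3/d_2$, so every element of $\smap{U}[d_0,\ldots,d_3]$ has the same memory dimension $m$. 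I then fix isomorphisms $\hil{0}\cong\memory{0}\otimes\hil{1}$ and $\memory{0}\otimes\hil{2}\cong\hil{3}$ so that the identity superchannel is well defined and corresponds to these fixed identifications.

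The key step is to write $C_{\smap{U}}$ as a local-unitary rotation of $C_{\widetilde{\iden}}$. Every $U_0$ can be written as the fixed identification $J_0$ composed with a unitary $V_0$ on $\hil{0}$, namely $U_0=J_0V_0$. Likewise $U_1=V_3J_1$ with $V_3$ a unitary on $\hil{3}$. This is where $K=1$ is used: both unitaries touch only one memory, so the whole freedom can be pushed onto the external ports $\hil{0}$ and $\hil{3}$. For larger $K$ the middle unitaries $U_k$ act on $\memory{k-1}\otimes\hil{2k}$ and cannot be absorbed into port unitaries, which is presumably why the lemma restricts to $K=1$. In Choi form this gives
\begin{equation*}
C_{\smap{U}}=(V_0^{\top}\otimes V_3)\,C_{\widetilde{\iden}}\,(V_0^{\top}\otimes V_3)^\dagger,
\end{equation*}
with $V_0^\top$ acting on $\hil{0}$ under the paper's convention for the Choi operator. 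I will need to track transpose versus conjugate carefully so that this matches the conjugate appearing in $g_l$.

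Next I evaluate the link product. Let $W_0:=V_0^\top$. Then $C_{\smap{U}}^{\otimes N}$ equals $C_{\widetilde{\iden}}^{\otimes N}$ conjugated by $W_0^{\otimes N}$ on the ports $P_0$ and by $V_3^{\otimes N}$ on the ports $P_3$. The link product contracts over these ports with a partial transpose, so unitaries conjugating the contracted ports can be moved onto $L$. They appear there as $\bar W_0^{\otimes N}$ (up to transpose conventions) on $P_0$, and similarly for $V_3$ on $P_3$.

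I now apply covariance \eqref{eq:commutation2} with $l=0$ and $l=3$. Conjugating $L$ by $g_0$ and $g_3$ leaves $L$ invariant. Applying this trades the unitaries on the $P_l$ ports for the corresponding conjugate unitaries on the retrieval ports $\hil{0}^\mathrm{R}$ and $\hil{3}^\mathrm{R}$, which are not contracted. Using \eqref{eq:pSAR_comb2}, the result is
\begin{equation*}
L\star C_{\smap{U}}^{\otimes N}=(W_0\otimes V_3)\,p\,C_{\widetilde{\iden}}\,(W_0\otimes V_3)^\dagger=p\,C_{\smap{U}}.
\end{equation*}
The main obstacle is bookkeeping of transposes and conjugates: I must check that the action of $g_l(V)$, with $V$ on $P_l$ and $\bar V$ on $\hil{l}^\mathrm{R}$, exactly matches the rotation required at the output. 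I would verify this first on a single port using $(A\otimes\id)\ket{\Psi}=(\id\otimes A^\top)\ket{\Psi}$.
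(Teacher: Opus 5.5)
Your proposal is correct and is essentially the paper's proof. Both write $\smap{U}$ as $\widetilde{\iden}$ pre- and post-composed with unitaries on $\hil{0}$ and $\hil{3}$, pull these through the link product onto the $P_0$ and $P_3$ ports of $L$, and use the $l=0,3$ covariance to move them to the uncontracted retrieval ports before invoking \eqref{eq:pSAR_comb2}. On the bookkeeping you flagged: the operator that lands on $P_0$ is $W_0^{\top}=V_0$ (not $\bar W_0$) and the one on $P_3$ is $V_3^{\top}$, so covariance with $V=V_0$ and $V=V_3^{\top}$ gives exactly $V_0^{\top}$ on $\hil{0}^{\mathrm{R}}$ and $V_3$ on $\hil{3}^{\mathrm{R}}$, as in the paper.
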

\noindent See Appendix~\ref{subsec:proof1} for the full proof of Lemma \ref{lem:symmetry} and \ref{subsec:proof2} for the proof of Lemma \ref{lem:identity_is_enough}.

The mixed Schur-Weyl duality, as detailed in \cite{Grinko_2024, grinko2023gelfandtsetlinbasispartiallytransposed}, implies that $L$ has the symmetry \eqref{eq:commutation2} if and only if it has the following decomposition:\footnote{The notation in Eq.\,\eqref{eq:mixed_SW}, such as 
\(\left( \sum_{a_l \in A_l} \right)_{l=0,\dots,2K+1}\), 
compactly represents the nested summation \(\sum_{a_0 \in A_0} \dots \sum_{a_{2K+1} \in A_{2K+1}}\).}
\begin{equation}
    \label{eq:mixed_SW} L = \left( \sum_{\lambda_l \in \hat{\maptp}^{d_l}_{N,1}} \sum_{S_l,T_l \in \mathrm{Paths}(\lambda_l)} \right)_{l=0,\dots,2K+1} 
    c_{(S_l,T_l)_l}^{(\lambda_l)_l} \bigotimes_{l=0}^{2K+1} E_{S_l, T_l}^{\lambda_l}.
\end{equation}
Here, \( \hat{\maptp}^{d_l}_{N,1} \) denotes the set of irreducible representations, labeled by mixed Young diagrams \(\lambda_l\), of the matrix algebra \( \maptp^{d_l}_{N,1} \) of partially transposed permutations. The elements \( E_{S_l, T_l}^{\lambda_l} \in \opsp{\bigotimes_{n=1}^{N+1} \lnsp{H}_l^n} \) are matrix units of \( \maptp^{d_l}_{N,1} \). The coefficients \( c_{(S_l,T_l)_l}^{(\lambda_l)_l} \) are variables that depend on \( L \).\footnote{Notations such as \(\maptp^{d_l}_{N,1}\) and \(\mathrm{Paths}(\lambda_l)\) follow \cite{grinko2023gelfandtsetlinbasispartiallytransposed}.}

In summary, to optimize the pSAR comb $L$ for $K=1$-slot superchannels and determine the maximum success probability, we focus on the constraints \eqref{eq:comb_conditions}, \eqref{eq:probabilistic_comb_condition}, and \eqref{eq:pSAR_comb2} within the decomposition framework of \eqref{eq:mixed_SW}.
For multi-slot combs ($K \geq 2$) these constraints remain necessary but may not be sufficient since Lemma \ref{lem:identity_is_enough} no longer holds. Therefore, the numerically obtained value of the probability under these constraints is only an upper bound of the maximum success probability.

\subsection{Optimal 1-to-1 pSAR}
The simplest case, $1$-to-$1$ pSAR, admits an analytical solution as follows.
\begin{theo}\label{thm:n1}
    The maximum success probability for the $1$-to-$1$ pSAR of staircases and superchannels of type $(d_0,\ldots,d_{2K+1})$ is given by
    \begin{equation}
        p_{\mathrm{max},1}^{\map{U} \rightarrow \map{U}} = p_{\mathrm{max},1}^{\smap{U} \rightarrow \map{U}} = p_{\mathrm{max},1}^{\smap{U} \rightarrow \smap{U}} = \Pi_{k=0,\mathrm{even}}^{2K} \frac{1}{d_k^2}.
    \end{equation}
    This is achieved by independently applying probabilistic teleportation to the systems $\hil{0},\hil{2},\ldots,\hil{2K}$.
\end{theo}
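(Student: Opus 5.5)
Because of the inequalities \eqref{eq:inequality1} and \eqref{eq:inequality2}, it is enough to show two things. First, a lower bound $p_{\mathrm{max},1}^{\smap{U} \rightarrow \smap{U}} \geq \Pi_{k=0,\mathrm{even}}^{2K} d_k^{-2}$. Second, an upper bound $p_{\mathrm{max},1}^{\smap{U} \rightarrow \map{U}} \leq \Pi_{k=0,\mathrm{even}}^{2K} d_k^{-2}$. Together these squeeze all three quantities to the same value.

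\emph{Achievability.} We use the partial-teleportation protocol of Section~\ref{subsec:partial_teleportation} with $N=1$, replacing PBT by ordinary post-selected teleportation on $\hil{0}$.
\begin{itemize}
\item Storage inserts the single black box $\smap{U}$ with half of a maximally entangled state $\Psi_{\hil{0}}$ at the input $\hil{0}$.
\item At each slot, storage feeds the output $\hil{2k-1}$ into one half of $\Psi_{\hil{2k-1}}$ and feeds the other half of a fresh $\Psi_{\hil{2k}}$ into $\hil{2k}$. The resulting state $\sigma_{\smap{U}}$ is, up to normalisation, the Choi operator $C_{\smap{U}}$.
\item Retrieval performs teleportation measurements onto $\hil{0},\hil{2},\ldots,\hil{2K}$ in temporal order, each at the moment the corresponding input arrives.
\end{itemize}
Each measurement is post-selected on the $\ket{\Psi}$ outcome, which has probability $d_k^{-2}$ independently of the inputs. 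On success, the corresponding input is transported into the stored comb and the output $\hil{2k+1}$ is released right away. This preserves the causal order, so the result is a valid retrieval superchannel, and the total success probability is $\Pi_k d_k^{-2}$. Equivalently, in the language of \eqref{eq:pSAR_comb} we have $L = \Pi_k d_k^{-2}\,\Psi\otimes\cdots\otimes\Psi$ (identity links), and $L\star C_{\smap{U}} = \Pi_k d_k^{-2} C_{\smap{U}}$, with $L \le L^{\mathrm{det}}$ checked directly.

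\emph{Converse.} For the upper bound we relax the problem: any superchannel-to-staircase pSAR comb for $N=1$ is in particular a probabilistic map $L$, satisfying $L\le L^{\mathrm{det}}$ with $L^{\mathrm{det}}$ a deterministic comb, such that $L\star C_{\smap{U}} = p\, C_{\map{U}}$ for all unitary $\smap{U}$. By Lemma~\ref{lem:symmetry} we may assume $L$ and $L^{\mathrm{det}}$ are covariant under $g_l(V)$ for every $l$. With $N=1$, each port space is $\hil{l}\otimes\hil{l}^{\mathrm{R}}$ carrying $V\otimes\bar V$, whose commutant is spanned by $\id$ and $\Psi_l$.

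So $L$ is a combination of products of $\id$ and $\Psi_l$ over the ports. Linking with $C_{\smap{U}}$, the coefficient of the all-$\Psi$ term is what produces $C_{\smap{U}}$. Every term with some $\id_l$ yields a partial trace over that port, giving an operator that cannot be proportional to $C_{\smap{U}}$ for all $U$. Those coefficients must therefore cancel, which forces $p$ to be the all-$\Psi$ coefficient (possibly after combining terms).

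Then $p$ is bounded using the comb normalisation of $L^{\mathrm{det}}$ on the input side of the retrieval. The retrieval's input ports $\hil{0}^{\mathrm{R}},\hil{2}^{\mathrm{R}},\ldots$ are each of dimension $d_k$. The deterministic-comb conditions \eqref{eq:comb_conditions} force traces of order $\Pi_k d_k$ over the retrieval inputs, while $\Psi$ on $d_k^2$-dimensional spaces has norm $d_k$. Concretely, we pair $L$ with the test operator $\bigotimes_k \Psi_{\hil{k}}^{\top}$ (the Choi of identity staircase inputs). Via $L\le L^{\mathrm{det}}$ and the trace normalisation this gives $p\, \Pi_k d_k^{2} \le 1$, mirroring the standard $1$-to-$1$ unitary pSAR bound of \cite{sedlak2019PSAR}. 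Alternatively, one can reduce directly to that bound: the unitary staircases include all product unitaries $U_0\otimes\cdots$ acting on the input ports. For the ordering constraint, the deterministic comb $L^{\mathrm{det}}$ restricted to this subclass behaves like channel pSAR of a unitary on $\bigotimes_k \hil{2k}$ of dimension $D=\Pi d_k$, whose $1$-to-$1$ optimum is $1/D^2$. This is exactly the claimed value, provided the restriction to product-form unitaries does not raise the optimum.

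\emph{Main obstacle.} The hard step is the converse, specifically justifying that restricting to the staircase subclass (or twirling only over local $g_l$) does not permit beating $1/D^2$. The commutant of local twirls is larger than that of the global $U\otimes\bar U$ twirl, so cross terms $\id_l\otimes\Psi_{l'}$ must be handled. My first attempt will be the explicit commutant expansion above with $N=1$: it has only $2^{2K+2}$ terms, the linear constraint fixes most coefficients, and positivity of $L^{\mathrm{det}}-L$ combined with the trace conditions then yields $p\le \Pi d_k^{-2}$.
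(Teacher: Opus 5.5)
Your achievability argument is correct and is the paper's (independent teleportation on each input port, i.e.\ $L=p\bigotimes_l\Psi_l$). The converse is not established by either of your two routes.

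\textbf{Second route.} It runs the wrong way. Restricting to a subclass of unitaries can only make pSAR easier. So the clause ``provided the restriction to product-form unitaries does not raise the optimum'' is exactly what has to be proved. The bound $1/D^2$ holds for the full group $U(D)$ accessed as a channel, not for a subclass accessed with slot interventions.

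The paper supplies precisely this missing ingredient. It factorises each $d_k$ into primes and matches each prime factor of an input port with an equal prime factor of a later output port. It then restricts to superchannels that are tensor products of independent unitaries between matched factors. This subclass is bounded by the product rule, Lemma~\ref{lem:product_rule}, which gives $\prod_{k\,\mathrm{even}}d_k^{-2}$ for $p^{\smap{U}\rightarrow\map{U}}_{\mathrm{max},1}$, and hence for all three quantities via \eqref{eq:inequality1} and \eqref{eq:inequality2}.

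\textbf{First route.} You use the right tools: the twirl of Lemma~\ref{lem:symmetry} and the two-element commutant per port, which is what the paper uses inside the proof of the product rule. But the elimination step fails as stated. Knowing that the cross-term contributions cancel does identify $p$ with the coefficient of $\bigotimes_l\Psi_l$. However, that coefficient is not controlled by $L\le L^{\mathrm{det}}$ and the normalisation while cross terms of either sign survive. The linear condition alone does not remove them. Already for $K=0$, $L=p\,\Psi_0\otimes\Psi_1+b\,(\id_0\otimes\Psi_1-\Psi_0\otimes\id_1)$ satisfies $L\star C_{\map{U}}=p\,C_{\map{U}}$ for every $U$, because $\tr_{\mathrm{in}}C_{\map{U}}=\tr_{\mathrm{out}}C_{\map{U}}=\id$.

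The missing idea is positivity in an orthogonal basis:
\begin{itemize}
\item Write $L=\sum_s c_s\bigotimes_l P^l_{s(l)}$ with $P^l_0=\Psi_l/d_l$ and $P^l_1=\id-P^l_0$. Then $c_s\ge 0$, and each $M_s=(\bigotimes_l P^l_{s(l)})\star C_{\smap{U}}$ is positive.
\item Since $p\,C_{\smap{U}}$ has rank one, every $c_sM_s$ must be proportional to it.
\item You must then check that $M_s$ has rank larger than one for $s\neq 0$. This is explicit in the paper's single-channel formula. For a general unitary comb it follows, e.g., from every single-port marginal of $C_{\smap{U}}$ being proportional to $\id$.
\end{itemize}
This yields $L=c\bigotimes_l P^l_0$ and $p=c\prod_l d_l^{-1}=c\prod_{k\,\mathrm{even}}d_k^{-2}$.

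You still need $c\le 1$. The paper obtains it from the last comb condition $\tr_{\mathrm{R,out}}L^{\mathrm{det}}=\id_{\mathrm{R,in}}\otimes L'$. Together with the symmetry, this forces $\tr_{\mathrm{R,out}}L^{\mathrm{det}}\propto\id$, so $L\le L^{\mathrm{det}}$ gives $c\bigotimes_{k\,\mathrm{even}}P^k_0\le\id$. Your test-operator sketch does not yet do this, since before the elimination the pairing also picks up the cross terms.

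Completed this way, your first route would be a valid direct alternative that avoids the prime-factor matching, at the cost of a general rank argument for arbitrary unitary combs. The paper's reduction to independent equal-dimension channels instead makes every $M_s$ explicitly computable.
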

For the superchannel-to-superchannel pSAR, the optimal protocol described in Theorem~\ref{thm:n1} coincides with Protocol 1, which is based on partial teleportation (see Section~\ref{subsec:partial_teleportation}). Note that probabilistic port-based teleportation reduces to standard teleportation in the absence of ancillary ports, i.e., when $N=1$.

The theorem is a consequence of the following lemma concerning the $1$-to-$1$ pSAR of multiple independent unitary channels.
\begin{lemm}\label{lem:product_rule}
    The maximum success probability of retrieving a tensor-product unitary channel $\otimes_{i=1}^I \map{U}_i$ $(U_i \in U(d_i))$ from any storage scheme that invokes each independent channel exactly once at arbitrary times is given by
    \begin{equation}
        \prod_{i=1}^I \frac{1}{d_i^2}.
    \end{equation}
\end{lemm}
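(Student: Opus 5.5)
Achievability is immediate: for each $i$, the storage sends one half of a maximally entangled state $\Psi_{d_i}$ through $\map{U}_i$, at whatever time that channel is available. The retrieval then performs standard probabilistic teleportation on each factor, post-selecting the outcome $\ket{\Psi}\bra{\Psi}$. Each factor succeeds with probability $1/d_i^2$ independently, so the product $\prod_i d_i^{-2}$ is attainable. The real content is the converse, and I would prove it in the comb formalism of the previous subsections.

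For the upper bound, I would model the whole scheme as a single probabilistic comb $L$. Its ordering is fixed: the storage queries the $I$ channels in some definite order, possibly interleaved with memory operations, and the retrieval acts afterwards. Since any ordering only restricts the set of allowed $L$, it suffices to bound $p$ over a relaxed feasible set. I would drop the causal ordering entirely and only require $L\le L^{\mathrm{det}}$, where $L^{\mathrm{det}}$ is a deterministic comb. The key consequence I would extract from the comb conditions \eqref{eq:comb_conditions} is a normalization: for a storage that calls each channel once followed by a retrieval channel, the trace satisfies $\tr L^{\mathrm{det}} = \prod_i d_i \cdot d_i$, with one factor $d_i$ from the query input $\hil{}_i^{\mathrm{in}}$ and one from the retrieved input $\hil{}_i^{\mathrm{R,in}}$. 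More precisely, a deterministic comb whose inputs are $(\hil{}_i^{\mathrm{in}}, \hil{}_i^{\mathrm{R,in}})_i$ has trace equal to the product of its input dimensions, irrespective of order, because each partial trace over a last output yields an identity on the last input.

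Next I would twirl, as in Lemma~\ref{lem:symmetry}, independently over each $U(d_i)$ with $g_i(V)=V_{\mathrm{in}}\otimes\bar V_{\mathrm{out}}\otimes\bar V_{\mathrm{R,in}}\otimes V_{\mathrm{R,out}}$, interpreting conjugations according to the Choi convention. This lets me assume that $L$ is invariant while preserving both $p$ and the trace. The success condition $L\star C_{\map{U}}=p\,C_{\map{U}}$ for all $U=\otimes_i U_i$ can then be evaluated and averaged. I would contract it with $C_{\map{U}}^{T}$ and integrate over the Haar measure. Writing $\dket{U}$ for the vectorized unitary, with $C_{\map{U}}=\dket{U}\dbra{U}$, I get
\begin{equation}
p\,\prod_i d_i^2 \;=\; \int dU\, \dbra{U}\!\otimes\!\dbra{\bar U}\, L \,\dket{U}\!\otimes\!\dket{\bar U}
\;\le\; \Big\| \int dU\, \dket{U}\dbra{U}\otimes\dket{\bar U}\dbra{\bar U}\Big\|_{\infty}\, \tr L ,
\end{equation}
where the index placement is schematic. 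Since the integrand is a tensor product over $i$, the Haar integral factorizes. Each factor's operator norm is computable by Schur's lemma: for $\dket{U}\otimes\dket{\bar U}$ on $d^4$ dimensions, the average projects onto $U\otimes\bar U$ invariants, and I expect the norm to be $1/(d^2-1)$-type or exactly $1/d^2$. I would double check this against the known $N=1$ channel pSAR value $1/d^2$ from \cite{sedlak2019PSAR}, which is the $I=1$ case. Combining the factorized norm with $\tr L\le \tr L^{\mathrm{det}}=\prod_i d_i^2$ then yields $p\le\prod_i d_i^{-2}$.

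The main obstacle is making this norm/trace bound tight rather than merely valid. A naive operator-norm bound may give $\prod_i \frac{d_i^2}{d_i^2-1}\cdot d_i^{-2}$-type constants instead of exactly $d_i^{-2}$. If that happens, I would instead use the reduction to the identity, in the spirit of Lemma~\ref{lem:identity_is_enough}. Using $p\,C_{\iden}=L\star C_{\iden}$ and taking the trace against $C_{\iden}$ gives $p\,\prod_i d_i^2 = \tr[L\,(C_{\iden}\otimes C_{\iden})]$. Since $C_{\iden}\otimes C_{\iden}$, i.e.\ $\Psi\otimes\Psi$ on each factor, is rank one with norm $\prod_i d_i^2$, this already gives only a weak bound. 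I would therefore sharpen it by using the invariance to restrict $L$ on each factor to the span of $\{\id,\Psi_{\mathrm{in,out}},\Psi_{\mathrm{R}},\ldots\}$ given by the mixed Schur–Weyl basis \eqref{eq:mixed_SW} with $N=1$. The optimization then becomes a small linear program in each factor. Independence across $i$ holds because the symmetry group is the product $\prod_i U(d_i)$, so the commutant is the tensor product of the single-factor commutants. The remaining subtle step is showing that the deterministic-comb constraint, which does not factorize when the queries are interleaved in time, is still dominated by the product of single-factor constraints. I would handle this via the ordering-independent trace identity above together with positivity of each factor's commutant coefficients.
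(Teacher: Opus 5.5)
Your achievability argument is fine, but the converse has a genuine gap. In both of your routes, the only consequence of the comb structure you actually use is the total normalization $\tr L^{\det}=\prod_i d_i^2$ (plus positivity), and you drop the causal ordering explicitly. That relaxation cannot yield the bound. Consider the operator
\[
L=L^{\det}=\bigotimes_{i=1}^I \dket{\id}\dbra{\id}_{\hil{i}^{\mathrm{S,in}}\hil{i}^{\mathrm{R,in}}}\otimes\dket{\id}\dbra{\id}_{\hil{i}^{\mathrm{S,out}}\hil{i}^{\mathrm{R,out}}} .
\]
It is positive, invariant under every twirl you consider, and has trace $\prod_i d_i^2$. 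It also satisfies $L\star C_{\map{U}}=C_{\map{U}}$ for all $\map{U}=\otimes_i\map{U}_i$, so $p=1$: it is the acausal wiring that routes the retrieval input straight into the stored channels.

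The failure shows up in both of your arguments.
\begin{itemize}
\item First route: the averaged operator $\int dU\,\dket{U}\dbra{U}\otimes\dket{\bar U}\dbra{\bar U}$ has operator norm exactly $1$ per factor. The product of normalized maximally entangled vectors on $(\mathrm{in},\mathrm{in}')$ and $(\mathrm{out},\mathrm{out}')$ has overlap of modulus $1$ with $\dket{U}\otimes\dket{\bar U}$ for every $U$. So your first inequality gives only $p\le1$, not a bound that is off by a constant.
\item Fallback: symmetry and the rank-one argument leave $L=c\bigotimes_i P_0^{i,\mathrm{in}}\otimes P_0^{i,\mathrm{out}}$, with $P_0^{i,X}=\dket{\id}\dbra{\id}/d_i$ on $\hil{i}^{\mathrm{S},X}\otimes\hil{i}^{\mathrm{R},X}$ and $p=c\prod_i d_i^{-2}$. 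The trace identity then gives only $c=\tr L\le\prod_i d_i^2$, again $p\le 1$. Positivity of the commutant coefficients cannot help, because the surviving projector has rank one and enters the trace with weight one.
\end{itemize}

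The missing idea is to use, at the operator level, the one piece of causal structure that matters: the retrieval is the last tooth. Write $\mathrm{R,out}$ and $\mathrm{R,in}$ for $\bigotimes_i\hil{i}^{\mathrm{R,out}}$ and $\bigotimes_i\hil{i}^{\mathrm{R,in}}$. The comb condition then gives $\tr_{\mathrm{R,out}}L^{\det}=\id_{\mathrm{R,in}}\otimes L'$.
\begin{itemize}
\item Invariance of $L^{\det}$ under $V_{\hil{i}^{\mathrm{S,out}}}\otimes\bar V_{\hil{i}^{\mathrm{R,out}}}$ makes $\tr_{\mathrm{R,out}}L^{\det}$ act as the identity on the S,out ports.
\item Combining this with the form $\id_{\mathrm{R,in}}\otimes L'$ and invariance under $V_{\hil{i}^{\mathrm{S,in}}}\otimes\bar V_{\hil{i}^{\mathrm{R,in}}}$ forces it to be proportional to the identity.
\item Your trace identity fixes the constant: $\tr_{\mathrm{R,out}}L^{\det}=\prod_i d_i^{-1}\,\id$.
\end{itemize}
Applying $\tr_{\mathrm{R,out}}$ to $L\le L^{\det}$ gives $c\prod_i d_i^{-1}\bigotimes_i P_0^{i,\mathrm{in}}\otimes\id\le\prod_i d_i^{-1}\id$. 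Hence $c\le1$ and $p\le\prod_i d_i^{-2}$.

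This is how the paper closes the argument. It also dissolves your worry about interleaved queries: the ordering among the storage calls is never used, only the fact that storage precedes retrieval.

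Two smaller corrections:
\begin{itemize}
\item The reduction to the identity requires twirling the input and output ports independently ($U_i\mapsto W_iU_iV_i$), as in Lemma~\ref{lem:symmetry}. Your single diagonal $g_i(V)$ only gives invariance under a conjugation-type action, whose orbit through the identity channel is trivial.
\item With the per-port twirl, the commutant on each factor is spanned by $P_0^{i,X},P_1^{i,X}$, which pair the storage and retrieval copies of the same port; there is no $\Psi_{\mathrm{in,out}}$ term. The rank-one condition alone then kills every coefficient but one, so no linear program is needed.
\end{itemize}
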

This lemma implies that the optimal $1$-to-$1$ pSAR for tensor-product channels is realized by independently applying probabilistic teleportation to each component—the same protocol that is optimal for individual quantum channels. This ``product rule'' is a known feature in the estimation of tensor-product channels; specifically, the maximization of certain figures of merit is achieved through independent estimation of the individual components \cite{Chiribella_2012}. Here, we demonstrate that a similar product rule holds for the pSAR of quantum channels. Our proof of Lemma~\ref{lem:product_rule}, presented in Appendix~\ref{sec:product_rule}, employs a methodology distinct from the approach in Ref.~\cite{Chiribella_2012}.

\textit{Proof of Theorem \ref{thm:n1}}.
Let $(d_0,\ldots,d_{2K+1})$ be the type of a unitary superchannel, and let
\begin{equation}
    d_k = p_{k,1} p_{k,2} \cdots p_{k, m_k}, \qquad (k=0,\ldots,2K+1)
\end{equation}
be the factorization of $d_k$ into prime numbers (where $p_{k,i}$ and $p_{k,i'}$ may be equal for $i \neq i'$). Each Hilbert space admits a corresponding decomposition $\hil{k} = \hil{k,1} \otimes \cdots \otimes \hil{k,m_k}$.

Since the superchannel is unitary, for each $(k,i)$ with even $k$ (input) there must exist an index $(k',i')$ with odd $k'$ (output) such that $k < k'$ and $p_{k,i} = p_{k',i'}$. It is possible to make a one-to-one matching by such pairing between indices belonging to input ports and those belonging to output ports. We denote $(k,i) \rightarrow (k',i')$ when $(k,i)$ and $(k',i')$ are a pair of this matching.

Within the class of unitary superchannels of type $(d_0,\ldots,d_{2K+1})$, we consider those that decompose into independent unitary channels from $\hil{k,i}$ to $\hil{k',i'}$ for all pairs $(k,i) \rightarrow (k',i')$ in the matching. According to Lemma~\ref{lem:product_rule}, the maximum success probability for retrieving these restricted unitary superchannels is
\begin{equation}
    \label{eq:upper_bound}  \prod_{(k,i),k:even} \frac{1}{p_{k,i}^2} = \prod_{k=0,even}^{2K} \frac{1}{d_k^2}.
\end{equation}
By construction, this serves as an upper bound for the success probability of $1$-to-$1$ superchannel-to-staircase pSAR.

This upper bound \eqref{eq:upper_bound} is achievable by applying probabilistic teleportation independently to each subsystem $\hil{k,i}$. Thus, \eqref{eq:upper_bound} gives the maximum success probability $p_{\mathrm{max},1}^{\smap{U} \rightarrow \map{U}} (d_0,\ldots,d_{2K+1})$. Furthermore, since this independent probabilistic teleportation protocol is also applicable to both staircase-to-staircase and superchannel-to-superchannel pSAR, we have
\begin{equation}
    p_{\mathrm{max},1}^{\smap{U} \rightarrow \map{U}} = \prod_{k=0,even}^{2K} \frac{1}{d_k^2} \leq p_{\mathrm{max},1}^{\map{U} \rightarrow \map{U}}, p_{\mathrm{max},1}^{\smap{U} \rightarrow \smap{U}}.
\end{equation}
These inequalities must be equalities, as the general relations \eqref{eq:inequality1} and \eqref{eq:inequality2} hold.
$\qed$\\

\subsection{Numerical analysis}
All constraints for the pSAR comb $L$, namely \eqref{eq:comb_conditions}, \eqref{eq:probabilistic_comb_condition}, and \eqref{eq:pSAR_comb2} within the decomposition framework of \eqref{eq:mixed_SW}, can be cast as an SDP.
However, the high dimensionality of the variables necessitates additional techniques to reduce the computational complexity. The details of this reduction process are provided in Appendix~\ref{sec:reduction}.

We summarize the numerically obtained maximum success probabilities for the three configurations of pSAR as follows: staircase pSAR in Table~\ref{tab:staircase}, superchannel-to-staircase pSAR in Table~\ref{tab:superchannel-to-staircase}, and superchannel pSAR in Table \ref{tab:superchannel}. Note that the numerical value is only an upper bound for the type $(4, 2,2,2,2,4)$, as Lemma \ref{lem:identity_is_enough} is no longer available for $K \geq 2$. We used MOSEK as the solver and CVX and the interpreter in MATLAB to numerically solve the SDP.\footnote{The code is available at \url{https://github.com/butterfly1026/psar_of_pure_combs}}
\begin{table}[htbp]
    \centering
    \begin{tabular}{|c|c|S[table-format=1.6,group-separator={}]|S[table-format=1.6,group-separator={}]|c|}
    \hline
    \( N \) & {staircase type} & $p^{\map{U} \rightarrow \map{U}}_{\mathrm{SDPmax},N}$ & { \( p_N^{\mathrm{PBT}} \) } & $(p^{\map{U} \rightarrow \map{U}}_{\mathrm{SDPmax},N} - p_N^{\mathrm{PBT}})/p_N^{\mathrm{PBT}}$ \\
    \hline
    1 & (4, 2, 2, 4) & 0.01563  & 0.01563  & \( < 10^{-10} \) \\
    1 & (4, 2, 2, 2, 2, 4) & 0.003906 & 0.003906 & \( < 10^{-10} \) \\
    \hline
    2 & (4, 2, 2, 4) & 0.03077 & 0.03077 & \( 1.5 \times 10^{-4} \) \\
    2 & (4, 2, 3, 6) & 0.01380 & 0.01379 & \( 4.1 \times 10^{-4} \) \\
    2 & (6, 2, 2, 6) & 0.01380 & 0.01379 & \( 4.2 \times 10^{-4} \) \\
    2 & (6, 3, 2, 4) & 0.01380 & 0.01379 & \( 4.1 \times 10^{-4} \) \\
    2 & (6, 3, 3, 6) & 0.006161 & 0.006154 & \( 1.2 \times 10^{-3} \) \\
    2 & (4, 2, 2, 2, 2, 4) & 0.007999 & 0.007782 & \( 2.7 \times 10^{-2} \) \\
    \hline
    3 & (4, 2, 2, 4) & 0.04723 & 0.04545 & \( 3.9 \times 10^{-2} \) \\
    \hline
    \end{tabular}
    \caption{
        The maximum success probabilities of $N$-to-one pSAR of unitary staircases, numerically obtained by solving SDP. The numerically maximized success probabilities $p^{\map{U} \rightarrow \map{U}}_{\mathrm{SDPmax},N}$ are compared to the success probability $p^\text{PBT}_N$ of the PBT protocol.
    }
\label{tab:staircase}
\end{table}
\begin{table}[htbp]
    \centering
    \begin{tabular}{|c|c|S[table-format=1.6,group-separator={}]|S[table-format=1.6,group-separator={}]|c|}
    \hline
    \( N \) & {staircase type} & $p^{\smap{U} \rightarrow \map{U}}_{\mathrm{SDPmax},N}$ & { \( p_N^{\mathrm{PBT}} \) } & $(p^{\smap{U} \rightarrow \map{U}}_{\mathrm{SDPmax},N} - p_N^{\mathrm{PBT}})/p_N^{\mathrm{PBT}}$ \\
    \hline
    1 & (4, 2, 2, 4) & 0.01563  & 0.01563  & \( < 10^{-10} \) \\
    1 & (4, 2, 2, 2, 2, 4) & 0.003906 & 0.003906 & \( < 10^{-10} \) \\
    \hline
    2 & (4, 2, 2, 4) & 0.03077 & 0.03077 & \( 1.5 \times 10^{-4} \) \\
    2 & (4, 2, 3, 6) & 0.01380 & 0.01379 & \( 4.1 \times 10^{-4} \) \\
    2 & (6, 2, 2, 6) & 0.01380 & 0.01379 & \( 4.2 \times 10^{-4} \) \\
    2 & (6, 3, 2, 4) & 0.01380 & 0.01379 & \( 4.1 \times 10^{-4} \) \\
    2 & (6, 3, 3, 6) & 0.006164 & 0.006154 & \( 1.6 \times 10^{-3} \) \\
    2 & (4, 2, 2, 2, 2, 4) & 0.007952 & 0.007782 & \( 2.2 \times 10^{-2} \) \\
    \hline
    3 & (4, 2, 2, 4) & 0.04748 & 0.04545 & \( 4.4 \times 10^{-2} \) \\
    \hline
    \end{tabular}
    \caption{
        The maximum success probabilities of $N$-to-one superchannel-to-staircase pSAR, numerically obtained by solving SDP. The numerically maximized success probabilities $p^{\smap{U} \rightarrow \map{U}}_{\mathrm{SDPmax},N}$ are compared to the success probability $p^\text{PBT}_N$ of the PBT protocol.
    }
\label{tab:superchannel-to-staircase}
\end{table}
\begin{table}[htbp]
    \centering
    \begin{tabular}{|c|c|c|c|c|}   
    \hline
    \( N \) & superchannel type & { \( p^{\smap{U} \rightarrow \smap{U}}_{\mathrm{SDPmax},N} \) } & { \( p^\text{tele}_N \) } & \( (p^{\smap{U} \rightarrow \smap{U}}_{\mathrm{SDPmax},N} - p^\text{tele}_N)/ p^\text{tele}_N \) \\
    \hline
    1 & (4, 2, 2, 4) & 0.01563  & 0.01563  & \( < 10^{-10} \) \\
    1 & (4, 2, 2, 2, 2, 4) & 0.003906 & 0.003906 & \( < 10^{-10} \) \\
    \hline
    2 & (4, 2, 2, 4) & 0.02942 & 0.02941 & \( 1.5 \times 10^{-4} \) \\
    2 & (4, 2, 3, 6) & 0.01308 & 0.01307 & \( 5.5 \times 10^{-4} \) \\
    2 & (6, 2, 2, 6) & 0.01352 & 0.01351 & \( 4.3 \times 10^{-4} \) \\
    2 & (6, 3, 2, 4) & 0.01352 & 0.01351 & \( 4.3 \times 10^{-4} \) \\
    2 & (6, 3, 3, 6) & 0.006014 & 0.006006 & \( 1.3 \times 10^{-3} \) \\
    2 & (4, 2, 2, 2, 2, 4) & 0.007592 & 0.007353 & \( 3.2 \times 10^{-2} \) \\
    \hline
    3 & (4, 2, 2, 4) & 0.04329 & 0.04167 & \( 3.9 \times 10^{-2} \) \\
    \hline
    \end{tabular}
    \caption{The maximum success probabilities for the $N$-to-one pSAR of unitary superchannels, numerically obtained by solving SDP. The numerically maximized success probabilities $p^{\smap{U} \rightarrow \smap{U}}_{\mathrm{SDPmax},N}$ are compared to the success probability $p^\text{tele}_N$ of the partial teleportation protocol.}
\label{tab:superchannel}
\end{table}

For both staircase and superchannel-to-staircase pSARs, the numerical results are compared with the success probabilities $p_N^{\mathrm{PBT}} := {N}/{(N-1+D^2)}$ (where $D := d_0 \times \cdots \times d_{2K}$) achieved by the PBT protocol, assuming collective PBT on all input ports. For superchannel pSAR, the values are compared with the success probabilities $p_N^{\mathrm{tele}}$ (given by \eqref{eq:p_teleportation}) attained by the partial teleportation protocol.
The results at $N=1$ confirm Theorem \ref{thm:n1}, namely, that the maximum success probability of $1$-to-$1$ pSAR is achieved by the probabilistic teleportation.
At $N = 2$ and $K = 1$, the numerical maximum success probabilities show excellent agreement with $p_N^{\mathrm{PBT}}$ and $p^\mathrm{tele}_N$, implying that these protocols are nearly optimal in this regime.

At $N=3$ or $K=2$, the discrepancy between the numerical results and the protocol-based values becomes comparable to the gap between $p_N^{\mathrm{PBT}}$ and $p^\mathrm{tele}_N$, thereby precluding a definitive conclusion. Given the relatively large size of these instances, this deviation may be solely attributable to numerical errors of the SDP. Nevertheless, the result is consistent with the expectation that superchannel pSAR cannot be realized by PBT alone.

Recall that superchannel pSAR enables retrospective intervention in the superchannel, a capability that superchannel-to-staircase pSAR lacks. Consequently, our results suggest that, within the scope of our numerical analysis, the inherent cost of this retrospective intervention is quantitatively characterized by the performance gap between PBT and partial teleportation.

However, as previously noted, partial teleportation is proven to be sub-optimal for large $N$. Specifically, it cannot achieve a unit success probability, regardless of the number of queries to the unknown unitary superchannel. In contrast, the protocol based on staircase backstitch attains deterministic success in the asymptotic limit. The scope of our current numerical analysis was insufficient to observe a significant divergence between $p^\text{tele}_N$ and the true maximum success probability.

\section{Conclusion}\label{sec:conclusion}
In this work, we have investigated the probabilistic storage and retrieval (pSAR) of unitary staircases and unitary superchannels using both analytical and numerical approaches. We proposed pSAR protocols for unitary superchannels by decomposing the task into two stages: the state-channel transformation, as represented by existing channel pSAR protocols, and a subsequent channel-to-superchannel transformation. We identified partial teleportation and staircase backstitch as physical implementations of the channel-to-superchannel transformation, which in turn lead to two distinct pSAR strategies. As a byproduct of our analysis of pSAR for unitary superchannels, we also present a protocol for universally inverting a unitary superchannel while preserving its intervention structure.

Our numerical results within the regime of $K=1$ and $N \leq 2$ indicate that the maximum success probability for the pSAR of unitary staircases is achieved by port-based teleportation (PBT), the known optimal protocol for unitary channel pSAR \cite{sedlak2019PSAR}. This finding aligns with the lineage of studies on the SAR of restricted unitary classes, such as phase gates \cite{sedlak2020PSARphasegate} and pairs of unitaries \cite{sedlak2024twounitaries}. However, in contrast to those cases, the restriction to unitary staircases does not appear to enhance the pSAR success probability within the examined numerical range.

In the same regime, the maximum success probability for the $N$-to-one pSAR of unitary superchannels is closely matched by that of the partial-teleportation protocol. This result confirms that PBT---while optimal for channel pSAR---is insufficient for superchannel pSAR. By contrasting this with our superchannel-to-staircase pSAR results, we observe that for small $N$, the inherent cost of retrospective intervention is quantitatively characterized by the performance gap between PBT and partial teleportation.

Critically, however, this characterization is likely not universal. Theoretical considerations show that the staircase backstitch protocol outperforms partial teleportation for sufficiently large $N$. Our construction of the staircase backstitch predicts a transition in the optimal pSAR strategy as $N$ increases, although the current numerical analysis was insufficient to observe this crossover directly. Synthesizing these findings, an intriguing open question remains: how does the partial teleportation protocol, which is optimal for small $N$, connect to a regime for larger $N$ in which a constant overhead in the success probability can be avoided? It is conceivable that either the staircase backstitch becomes optimal in the asymptotic limit, or there exists a yet-undiscovered variant of PBT capable of coordinating delays across multiple slots at different times. Future numerical investigations with higher stability and precision for larger $N$ may reveal this transition.

Another open question is the existence of pSAR protocols for non-unitary quantum superchannels that achieve unit success probability asymptotically. While one might intuitively consider identifying a superchannel via infinite queries and reconstructing it, such a procedure approaches deterministic exact SAR via approximate SAR, which is conceptually distinct from our probabilistic exact SAR approach. In this context, extending the staircase backstitch to general superchannels is a potential direction. 

Finally, as this study focused on superchannels with a definite causal order, we did not consider indefinite causal structures \cite{oreshkov2012quantum, araujo2017purification} such as the quantum switch \cite{chiribella2013quantum}. Generalizing the pSAR framework to incorporate indefinite causal order represents a compelling avenue for future research.

\paragraph{Acknowledgment}
We would like to thank Alastair Abbott, Jessica Bavaresco, Dmitry Grinko, Marco T\'{u}lio Quintino, Akihito Soeda, and Satoshi Yoshida for helpful discussions. This work was supported by MEXT Quantum Leap Flagship Program (MEXT QLEAP) JPMXS0118069605 and JPMXS0120351339; Japan Science and Technology Agency (JST) as part of Adopting Sustainable Partnerships for Innovative Research Ecosystem (ASPIRE), Grant Number JPMJAP25A3; JST CREST, Grant Number JPMJCR25I5; JST NEXUS, Grant Number JPMJNX26C9; JSPS KAKENHI Grant No. 21H03394 and No. 23K21643; and IBM Quantum.

\bibliographystyle{unsrt}
\bibliography{references}

\appendix

\section{Probabilistic port-based teleportation for channel pSAR}\label{sec:PBT}
We incorporate probabilistic PBT \cite{ishizaka2008asymptotic,ishizaka2009quantum,studzinski2017port,mozrzymas2018optimal} as a subroutine for superchannel pSAR. We review its application to channel pSAR, specifically focusing on the variant that utilizes multiple maximally entangled states as a reference.

For an $N$-port probabilistic PBT, the sender and the receiver share $N$ maximally entangled states $\Psi_{\hil{i}}$ on $\hil{i}^A \otimes \hil{i}^B$ where $\hil{i} \sim \hil{}$ for $i=1,\ldots,N$. To transmit an unknown quantum state $\rho$ on $\hil{}$, the sender performs a joint POVM $\{ P_i \}_{i=0,1,\ldots,N}$ on the space $\otimes_i \hil{i}^A \otimes \hil{}$. If the outcome is $i=0$, the teleportation fails. If the outcome is $i \in \{ 1,\ldots,N \}$, the unnormalized state at the $i$-th port of the receiver is given by
\begin{equation}
    \tr_{\overline{\hil{i}^B}} [P_i (\otimes_j \Psi_{\hil{j}}) \otimes \rho] = \frac{1}{N-1+\dim \hil{}^2} \rho.
\end{equation}

To adapt probabilistic PBT for the pSAR of channels, $N$ instances of an unknown channel $\map{C}$ are applied to the receiver's side of the maximally entangled state during the storage stage (See Figure~\ref{fig:pbt}). In the retrieval stage, once the input state $\rho$ of the channel is provided, the joint POVM is performed. For a successful measurement outcome $i$, the state at the $i$th port becomes
\begin{equation}
    \tr_{\overline{\hil{i}^B}} [P_i (\otimes_j (\map{C} \otimes \iden) \Psi_{\hil{j}}) \otimes \rho] = \map{C} \left( \tr_{\overline{\hil{i}^B}} [P_i (\otimes_j \Psi_{\hil{j}}) \otimes \rho] \right) = \frac{1}{N-1+\dim \hil{}^2} \map{C}(\rho).
\end{equation}
The retrieval is completed by selecting the $i$th port corresponding to the measurement result. The total success probability is the probability of obtaining any outcome $i \in \{ 1,\ldots,N \}$, which is given by $N/(N-1+\dim \hil{}^2)$.
\begin{figure}[htb]
    \centering
    \includegraphics[width=0.4\textwidth]{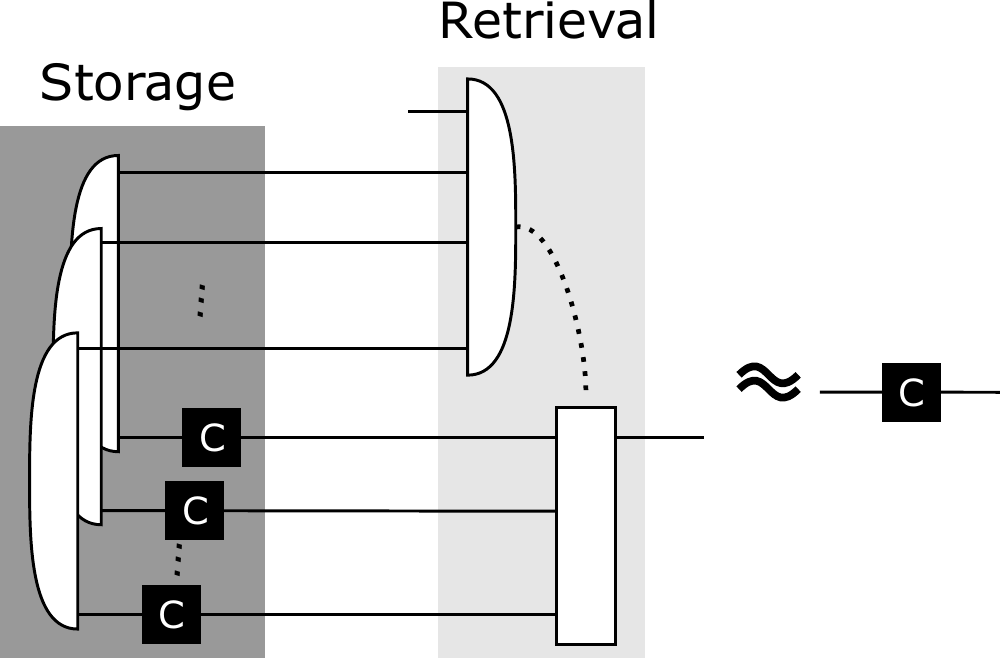}
    \caption{An optimal channel pSAR protocol implemented by PBT.}
    \label{fig:pbt}
\end{figure}

\section{Symmetry of SAR combs}
\subsection{Proof of lemma \ref{lem:symmetry}}\label{subsec:proof1}
Let $L$ be a probabilistic comb that satisfies the pSAR condition \eqref{eq:pSAR_comb}, namely,
\begin{equation}
    \label{eq:pSAR_comb_appendix} L \star C_{\smap{U}}^{\otimes N} = p ~C_{\smap{U}}, \qquad (^\forall \smap{U} \in \smap{U}[d_0,\ldots,d_{2K+1}]),
\end{equation}
and $L^\mathrm{det}$ be the deterministic comb of the same type that satisfies $L \leq L^\mathrm{det}$.
For any $l$ and any unitary operator $V$ on $\hil{l}$, the operators $g_l(V) L g_l(V)^\dagger$ and $g_l(V) L^\mathrm{det} g_l(V)^\dagger$ are probabilistic and deterministic combs of the same type, since they are local-unitary equivalent to the original ones, respectively. The unitary transformation also preserves the order, and thus we have $g_l(V) L g_l(V)^\dagger \leq g_l(V) L^\mathrm{det} g_l(V)^\dagger$.

To check that $g_l(V) L g_l(V)^\dagger$ realizes pSAR with probability $p$, we consider the cases of odd $l$ and even $l$, separately.\\
Case (i): $l$ is even. In this case $\otimes_{\hil{} \in P_l} V_{\hil{}}$ and $\bar{V}_{\hil{l}^\mathrm{R}}$ operate on several output and input ports of $L$, respectively. We omit identity operators in the following deduction. We have
\begin{align*}
    & g_l(V) L g_l(V)^\dagger \star C_{\smap{U}}^{\otimes N}\\
    &= \bar{V}_{\hil{l}^\mathrm{R}} \left( \otimes_{\hil{} \in P_l} V_{\hil{}} \right) L \left( \otimes_{\hil{} \in P_l} V_{\hil{}} \right) \star C_{\smap{U}}^{\otimes N} V_{\hil{l}^\mathrm{R}}^\top \\
    &= \bar{V}_{\hil{l}^\mathrm{R}} L \star C_{\smap{U} \circ \map{V}}^{\otimes N} V_{\hil{l}^\mathrm{R}}^\top \\
    &= p~ \bar{V}_{\hil{l}^\mathrm{R}} C_{\smap{U} \circ \map{V}} V_{\hil{l}^\mathrm{R}}^\top \\
    &= p~ C_{\smap{U}},
\end{align*}
as required, where $\smap{U} \circ \map{V}$ stands for the unitary superchannel such that the unitary channel $\map{V}(\cdot) = V \cdot V^\dagger$ operates at port $l$ before $\smap{V}$.\\
Case (ii): $l$ is odd. In this case $\otimes_{\hil{} \in P_l} V_{\hil{}}$ and $\bar{V}_{\hil{l}^\mathrm{R}}$ operate on several input and output ports of $L$, respectively. We have
\begin{align*}
    & g_l(V) L g_l(V)^\dagger \star C_{\smap{U}}^{\otimes N}\\
    &= \bar{V}_{\hil{l}^\mathrm{R}} \left( \otimes_{\hil{} \in P_l} V_{\hil{}} \right) L \left( \otimes_{\hil{} \in P_l} V_{\hil{}} \right) \star C_{\smap{U}}^{\otimes N} V_{\hil{l}^\mathrm{R}}^\top \\
    &= \bar{V}_{\hil{l}^\mathrm{R}} L \star C_{\map{V}^\top \circ \smap{U}}^{\otimes N} V_{\hil{l}^\mathrm{R}}^\top \\
    &= p~ \bar{V}_{\hil{l}^\mathrm{R}} C_{\map{V}^\top \circ \smap{U}} V_{\hil{l}^\mathrm{R}}^\top \\
    &= p~ C_{\smap{U}},
\end{align*}
as required, where $\map{V}^\top \circ \smap{U}$ stands for the unitary superchannel such that the unitary channel $\map{V}^\top(\cdot) = V^\top \cdot \bar{V}$ operates at port $l$ after $\smap{V}$.

Since $g_l(V) L g_l(V)^\dagger$ is a valid SAR comb and realizes the pSAR with probability $p$ for any unitary $V$, so does its Haar randomization
\begin{equation}
    L_{\mathrm{sym},l} := \int_{\mathrm{SU}(d_l)} dV g_l(V) L g_l(V)^\dagger.
\end{equation}
This probabilistic comb satisfies the relation $L_{\mathrm{sym},l} \leq L^\mathrm{det}_{\mathrm{sym},l}$ with the deterministic comb
\begin{equation}
     L^\mathrm{det}_{\mathrm{sym},l} := \int_{\mathrm{SU}(d_l)} dV g_l(V) L^\mathrm{det} g_l(V)^\dagger,
\end{equation}
since each summand satisfies this property. They also satisfy the commutation relation $[L_{\mathrm{sym},l}, g_l(V)]=0$ and $[L^\mathrm{det}_{\mathrm{sym},l}, g_l(V)] = 0$ by construction.
Therefore, applying Haar randomization to all ports $l=0,\ldots, 2K+1$, we obtain a SAR comb with the desired property.

\subsection{Proof of Lemma \ref{lem:identity_is_enough}}\label{subsec:proof2}
We focus on the sufficiency (the ``if'' direction) since the ``only if'' direction is trivial.

Let the unknown unitary superchannel be of type $(\dim \hil{0},\dim \hil{1},\dim \hil{2},\dim \hil{3})$. The memory system $\memory{}$ of the superchannel has dimension $d_{\memory{}} = \dim \hil{0} / \dim \hil{1} = \dim \hil{3} / \dim \hil{2}$. The unitary superchannel $\smap{U}$ of this type can be uniquely constructed by a pair of unitary operators $U_0$ on $\hil{0}$ and $U_3$ on $\hil{3}$.
In the comb representation, we have
\begin{equation}
    C_{\smap{U}} = C_{\map{U}_3} \star C_{\widetilde{\iden}} \star C_{\map{U}_0}.
\end{equation}
The action of the comb $L$ on $C_{\smap{U}}$ is given by
\begin{align}
    L \star C_{\smap{U}}^{\otimes N} &= L \star (C_{\map{U}_3})^{\otimes N} \star (C_{\map{U}_0})^{\otimes N} \star (C_{\widetilde{\iden}})^{\otimes N} \\
    \label{eq:appx1} &= (U_3^\top \otimes U_0)^{\otimes N} L (U_3^\top \otimes U_0)^{\dagger \otimes N} \star (C_{\widetilde{\iden}})^{\otimes N} \\
    \label{eq:appx2} &= (U_3 \otimes U_0^\top)_{\hil{}^\mathrm{R}} L (U_3 \otimes U_0^\top )^\dagger_{\hil{}^\mathrm{R}} \star (C_{\widetilde{\iden}})^{\otimes N} \\
    &= (U_3 \otimes U_0^\top)_{\hil{}^\mathrm{R}} L \star (C_{\widetilde{\iden}})^{\otimes N} (U_3 \otimes U_0^\top )^\dagger_{\hil{}^\mathrm{R}} \\
    &= (U_3 \otimes U_0^\top)_{\hil{}^\mathrm{R}} C_{\widetilde{\iden}} (U_3 \otimes U_0^\top )^\dagger_{\hil{}^\mathrm{R}} \\
    &= C_{\smap{U}},
\end{align}
which proves the lemma.
To obtain Eq.~\eqref{eq:appx2} from Eq.~\eqref{eq:appx1}, observe that the symmetry condition \eqref{eq:commutation2} can be rewritten as $g_l(V) L g_l(V)^\dagger =L$, and thus to
\begin{equation}
    \left( \otimes_{\hil{} \in P_l} V_{\hil{}} \right) L \left( \otimes_{\hil{} \in P_l} V_{\hil{}} \right)^\dagger = V^\top_{\hil{l}^\mathrm{R}} L \bar{V}_{\hil{l}^\mathrm{R}},
\end{equation}
where the identity operators on other ports are omitted for brevity.

\section{Product rule for pSAR of quantum channels}\label{sec:product_rule}
In this section, we present the proof of Lemma~\ref{lem:product_rule}, which states that the maximum success probability for $1$-to-$1$ pSAR of tensor-product channels is attained by applying independent probabilistic teleportations.

Let $L$ be a probabilistic comb realizing the $1$-to-$1$ pSAR of product of unitary channels $\left( \map{U}_i \right)_{i =1,\ldots,I }$ where $ \map{U}_i  \in \map{U}[d_i,d_i] $, and let $L^\mathrm{det}$ be a deterministic comb such that $L \leq L^\mathrm{det}$. We assume an arbitrary temporal structure for the storage phase, meaning that no constraints are imposed on the relative ordering of the inputs and outputs of each black box. The retrieval phase is intended to reproduce the parallel tensor product $\otimes_i \map{U}_i$ (an assumption invoked only at the final stage of the proof). We denote the input and output spaces of the stored unitary channels $\map{U}_i$ as $\hil{i}^\mathrm{S,in}$ and $\hil{i}^\mathrm{S,out}$, and the corresponding spaces for the retrieved channels as $\hil{i}^\mathrm{R,in}$ and $\hil{i}^\mathrm{R,out}$. Regardless of the specific causal ordering of the combs, we have
\begin{equation}
    L, ~ L^\mathrm{det} \in \lin \left( \bigotimes_{i=1}^I \hil{i}^\mathrm{S,in}\otimes \hil{i}^\mathrm{S,out} \otimes \hil{i}^\mathrm{R,in} \otimes \hil{i}^\mathrm{R,out} \right).
\end{equation}

Following the same reasoning as in Lemma~\ref{lem:symmetry}, we can assume that the pSAR combs satisfy the symmetry conditions
\begin{equation}
    \label{eq:symmetry_tensor_product}  \left[ L, g_X(V_i) \right] = 0, \quad \left[ L^\mathrm{det}, g_X(V_i) \right] = 0 \qquad (\forall V_i \in U(d_i), ~ i=1,\ldots,I)
\end{equation}
where $X \in \{ \mathrm{in}, \mathrm{out} \}$ and the operator $g_X(V_i)$ is defined as
\begin{equation}
    g_X (V_i) := V_{\hil{i}^\mathrm{S,X}} \otimes \overline{V}_{\hil{i}^\mathrm{R,X}} \otimes \id.
\end{equation}
Here, the identity operator acts on the complement of the subsystem $\hil{i}^\mathrm{S,X} \otimes \hil{i}^\mathrm{R,X}$. Based on the unitary invariance $[E, V_i \otimes \overline{V}_i] = 0$, Schur's lemma implies that any operator $E \in \lin (\hil{i}^\mathrm{S,X} \otimes \hil{i}^\mathrm{R,X})$ is a linear combination of two orthogonal projectors:
\begin{equation}
    P_0^{i,X} := \frac{\dket{\id_{d_i}^\mathrm{X}} \dbra{\id_{d_i}^\mathrm{X}}}{d_i}, \qquad P_1^{i,X} := \id_{\hil{i}^\mathrm{S,X}} \otimes \id_{\hil{i}^\mathrm{R,X}} - P_0^{i,X},
\end{equation}
where $\dket{\id_{d_i}^\mathrm{X}} := \sum_{j=1}^{d_i} \ket{i}_{\hil{i}^\mathrm{S,X}} \otimes \ket{i}_{\hil{i}^\mathrm{R,X}}$.
Consequently, the pSAR combs admit the following decompositions:
\begin{equation}
    \label{eq:decomposition_L}  L = \sum_{s} c_s \bigotimes_{i=1}^{I} P_{s(i,\mathrm{in})}^{i,\mathrm{in}} \otimes P_{s(i,\mathrm{out})}^{i,\mathrm{out}}, \qquad L^\mathrm{det} = \sum_{s \in \{ 0,1 \}^{I \times X} } c_s^\mathrm{det} \bigotimes_{i=1}^{I} P_{s(i,\mathrm{in})}^{i,\mathrm{in}} \otimes P_{s(i,\mathrm{out})}^{i,\mathrm{out}},
\end{equation}
where the summation is over all $s \in \{ 0,1 \}^{I \times \{ \mathrm{in}, \mathrm{out} \}} $.
The coefficients $c_s$ and $c_s^\mathrm{det}$ must be non-negative,as the combs are positive semidefinite and the basis elements are mutually orthogonal projectors.

Furthermore, following the same reasoning as in Lemma \ref{lem:identity_is_enough}, the necessary and sufficient condition for a comb with symmetry \eqref{eq:symmetry_tensor_product} to realize the pSAR of product unitary channels with probability $p$ (under the given temporal structure) is:
\begin{equation}
    L \star C_{\iden^S} = p ~ C_{\iden^R},
\end{equation}
where $\iden^S$ and $\iden^R$ denote the product of identity channels for the storage and retrieval stages, respectively. The left-hand-side reduces to:
\begin{equation}
    L \star C_{\iden^S} = \sum_{s \in \{ 0,1 \}^{I \times X} } c_s M_s \qquad M_s := \bigotimes_{i=1}^{I} \left( P_{s(i,\mathrm{in})}^{i,\mathrm{in}} \otimes P_{s(i,\mathrm{out})}^{i,\mathrm{out}} \right) \star C_{\iden^S_i},
\end{equation}
where $\iden^S_i$ is the identity channel from $\hil{i}^\mathrm{S,in}$ to $\hil{i}^\mathrm{S,out}$. Since $C_{\iden^R}$ is a rank-1 operator, $c_s$ must vanish whenever $M_s$ is not proportional to $C_{\iden^R}$.
A direct calculation for each $i$ yields:
\begin{align}
    \left( P_{s(i,\mathrm{in})}^{i,\mathrm{in}} \otimes P_{s(i,\mathrm{out})}^{i,\mathrm{out}} \right) \star C_{\iden^S_i} = \left\{
    \begin{array}{ll}
        \frac{1}{d_i^2} C_{\iden^R_i} & s(i,\mathrm{in})=s(i,\mathrm{out})=0, \\
        \frac{1}{d_i} \left( \id_{\hil{i}^\mathrm{R,in}} \otimes \id_{\hil{i}^\mathrm{R,out}} - \frac{1}{d_i} C_{\iden^R_i} \right) & s(i,\mathrm{in}) \neq s(i,\mathrm{out}), \\
        \left( d_i - \frac{2}{d_i} \right) \id_{\hil{i}^\mathrm{R,in}} \otimes \id_{\hil{i}^\mathrm{R,out}} + \frac{1}{d_i^2} C_{\iden^R_i} & s(i,\mathrm{in})=s(i,\mathrm{out})=1.
    \end{array}\right.
\end{align}
Given that $C_{\iden^R} = \otimes_i C_{\iden^R_i}$, it follows that $c_s = 0$ must hold whenever any outcome of $s$ is $1$.
Consequently, we arrive at the simplified expression for $L$:
\begin{equation}
    \label{eq:L_single_term}    L = c \bigotimes_{i=1}^{I} P_0^{i,\mathrm{in}} \otimes P_0^{i,\mathrm{out}}.
\end{equation}
The coefficient $c$ is related to the success probability $p$ via:
\begin{equation}
    \label{eq:prob_and_c}   p = c \prod_{i=1}^I d_i^{-2}.
\end{equation}

We now evaluate the comb conditions \eqref{eq:comb_conditions} and \eqref{eq:probabilistic_comb_condition} to determine the maximum value of the coefficient $c$.
The final two ports of $L^\mathrm{det}$ correspond to $\otimes_{i=1}^I \hil{i}^\mathrm{R,out}$ and $\otimes_{i=1}^I \hil{i}^\mathrm{R,in}$. The deterministic comb condition \eqref{eq:comb_conditions} implies that
\begin{equation}
    \tr_{\otimes_{i=1}^I \hil{i}^\mathrm{R,out}} L^\mathrm{det} = \id_{\otimes_{i=1}^I \hil{i}^\mathrm{R,in}} \otimes L',
\end{equation}
with some deterministic comb $L'$. However, given the decomposition \eqref{eq:decomposition_L} arising from the symmetry, we have
\begin{align}
    \tr_{\otimes_{i=1}^I \hil{i}^\mathrm{R,out}} L^\mathrm{det} &= \sum_{s \in \{ 0,1 \}^{I \times X} } c_s^\mathrm{det} \bigotimes_{i=1}^{I} P_{s(i,\mathrm{in})}^{i,\mathrm{in}} \otimes \tr_{\hil{i}^\mathrm{R,out}} \left[ P_{s(i,\mathrm{out})}^{i,\mathrm{out}} \right] \\
    & \propto \sum_{s \in \{ 0,1 \}^{I \times X} } c_s^\mathrm{det} \bigotimes_{i=1}^{I} P_{s(i,\mathrm{in})}^{i,\mathrm{in}} \otimes \id_{\hil{i}^\mathrm{S,out}} \\
    & = \id_{\otimes_{i=1}^I \hil{i}^\mathrm{S,out}} \otimes \left( \sum_{s \in \{ 0,1 \}^{I \times X} } c_s^\mathrm{det} \bigotimes_{i=1}^{I} P_{s(i,\mathrm{in})}^{i,\mathrm{in}} \right).
\end{align}
For these two expressions for $\tr_{\otimes_{i=1}^I \hil{i}^\mathrm{R,out}} L^\mathrm{det}$ to be consistent, the operator in the parentheses must satisfy
\begin{equation}
    \sum_{s \in \{ 0,1 \}^{I \times X} } c_s^\mathrm{det} \bigotimes_{i=1}^{I} P_{s(i,\mathrm{in})}^{i,\mathrm{in}} = \id_{\otimes_{i=1}^I \hil{i}^\mathrm{R,in}} \otimes L'',
\end{equation}
with some operator $L''$. This is possible only if
\begin{equation}
    \sum_{s \in \{ 0,1 \}^{I \times X} } c_s^\mathrm{det} \bigotimes_{i=1}^{I} P_{s(i,\mathrm{in})}^{i,\mathrm{in}} \propto \id_{\otimes_{i=1}^I \hil{i}^\mathrm{R,in}} \otimes \id_{\otimes_{i=1}^I \hil{i}^\mathrm{S,in}}.
\end{equation}
By considering the normalization of the deterministic comb, we obtain
\begin{align}
    \tr_{\otimes_{i=1}^I \hil{i}^\mathrm{R,out}} L^\mathrm{det} &= \prod_{i=1}^I d_i^{-1} \id_{\otimes_{i=1}^I \hil{i}^\mathrm{S,in}} \otimes \id_{\otimes_{i=1}^I \hil{i}^\mathrm{S,out}} \otimes \id_{\otimes_{i=1}^I \hil{i}^\mathrm{R,in}} \\
    &= \prod_{i=1}^I d_i^{-1} \bigotimes_{i=1}^I \id_{\hil{i}^\mathrm{S,in}} \otimes \id_{\hil{i}^\mathrm{S,out}} \otimes \id_{\hil{i}^\mathrm{R,in}}.
\end{align}
Similarly, taking the partial trace of $L$ in \eqref{eq:L_single_term} yields
\begin{equation}
    \tr_{\otimes_{i=1}^I \hil{i}^\mathrm{R,out}} L = c \prod_{i=1}^I d_i^{-1} \bigotimes_{i=1}^{I} P_0^{i,\mathrm{in}} \otimes \id_{\hil{i}^\mathrm{S,out}}.
\end{equation}
The condition for a probabilistic comb \eqref{eq:probabilistic_comb_condition} requires $\tr_{\otimes_{i=1}^I \hil{i}^\mathrm{R,out}} L \leq \tr_{\otimes_{i=1}^I \hil{i}^\mathrm{R,out}} L^\mathrm{det}$, which leads to
\begin{equation}
    c \bigotimes_{i=1}^{I} P_0^{i,\mathrm{in}} \leq \bigotimes_{i=1}^I \id_{\hil{i}^\mathrm{S,in}} \otimes \id_{\hil{i}^\mathrm{R,in}} = \id_{\otimes_{i=1}^I  \hil{i}^\mathrm{S,in} \otimes \hil{i}^\mathrm{R,in}}.
\end{equation}
Since $\otimes_{i=1}^{I} P_0^{i,\mathrm{in}}$ is a projector, this operator inequality is satisfied if and only if
\begin{equation}
    0 \leq c \leq 1.
\end{equation}

Combining the relation \eqref{eq:prob_and_c} with the constraint $c \leq 1$, we conclude that the maximum success probability is
\begin{equation}
    p = \prod_{i=1}^I d_i^{-2},
\end{equation}
which is attained when $c=1$. This success probability is realized by independently applying probabilistic teleportation for each unitary channel.

\section{Reduction of SDP} \label{sec:reduction}
Let $L$ be the probabilistic comb for the pSAR and $L^\mathrm{det}$ be the deterministic comb such that $L \leq L^\mathrm{det}$. From the unitary-equivalence symmetry we have the decompositions \eqref{eq:mixed_SW} for both operators:
\begin{align} \label{eq:4}
    L^\mathrm{det} &= \left( \sum_{\lambda_l \in \hat{\maptp}^{d_l}_{N,1}} \sum_{S_l,T_l \in \mathrm{Paths}(\lambda_l)} \right)_{l=0,\dots,2K+1} 
    c_{(S_l,T_l)_l}^{(\lambda_l)_l} \bigotimes_{l=0}^{2K+1} E_{S_l, T_l}^{\lambda_l},\\
    L &= \left( \sum_{\lambda_l \in \hat{\maptp}^{d_l}_{N,1}} \sum_{S_l,T_l \in \mathrm{Paths}(\lambda_l)} \right)_{l=0,\dots,2K+1} 
    a_{(S_l,T_l)_l}^{(\lambda_l)_l} \bigotimes_{l=0}^{2K+1} E_{S_l, T_l}^{\lambda_l}.
\end{align}
The matrix units $E_{S_l, T_l}^{\lambda_l}$ are in the form $\id_{m_{\lambda_l}(d)} \otimes \ket{S_l} \bra{T_l}$ in the Schur-transformed basis, where $m_{\lambda_l}(d)$ represents the multiplicity. For each combination $(\lambda_1,\ldots,\lambda_{2K+1})$ of irreducible representations, the coefficients define matrices
\begin{align}
    \label{eq:matrixC} C^{(\lambda_0,\ldots,\lambda_{2K+1})} &:= \left( c^{(\lambda_0,\ldots,\lambda_{2K+1})}_{(S_0,\ldots,S_{2K+1}), (T_0,\ldots,T_{2K+1})} \right)_{(S_0,\ldots,S_{2K+1}), (T_0,\ldots,T_{2K+1})},\\
    \label{eq:matrixA}  A^{(\lambda_0,\ldots,\lambda_{2K+1})} &:= \left( a^{(\lambda_0,\ldots,\lambda_{2K+1})}_{(S_0,\ldots,S_{2K+1}), (T_0,\ldots,T_{2K+1})} \right)_{(S_0,\ldots,S_{2K+1}), (T_0,\ldots,T_{2K+1})}.
\end{align}
With these matrices, the matrix representations of $L$ and $L^\mathrm{det}$ in the Schur-transformed basis are given by
\begin{align}
    [L^\mathrm{det}]^\mathrm{Sch} &:= \bigoplus_{\lambda_l \in \hat{\maptp}^{d_l}_{N,1}} \left( \bigotimes_{l=0,\ldots,2K+1} \id_{m_{\lambda_l}(d)} \right) \otimes C^{(\lambda_1,\ldots,\lambda_{2K+1})}, \\
    [L]^\mathrm{Sch} &:= \bigoplus_{\lambda_l \in \hat{\maptp}^{d_l}_{N,1}} \left( \bigotimes_{l=0,\ldots,2K+1} \id_{m_{\lambda_l}(d)} \right) \otimes A^{(\lambda_1,\ldots,\lambda_{2K+1})},
\end{align}
respectively. Consequently, it is evident that the condition $0 \leq L \leq L^\mathrm{det}$ is equivalent to
\begin{equation}
    \label{eq:probabilistic_condition_matrix}   0 \leq A^{(\lambda_0,\ldots,\lambda_{2K+1})} \leq C^{(\lambda_0,\ldots,\lambda_{2K+1})}
\end{equation}
for all combinations $(\lambda_0,\ldots,\lambda_{2K+1})$.

To optimize the pSAR comb $L$ to obtain the maximum success probability, we can focus on the constraints \eqref{eq:comb_conditions}, \eqref{eq:probabilistic_condition_matrix}, and \eqref{eq:pSAR_comb2}, by treating matrices \eqref{eq:matrixC} and \eqref{eq:matrixA} as variables.
Although all constraints can be cast into SDP, the problem size remains substantial, leading to significant computational overhead.
Our numerical analysis is based on a further reduction of the problem, which can be broadly categorized into the following two types:
\begin{itemize}
    \item Representation of constraints \eqref{eq:comb_conditions} in the Schur-transformed basis, and
    \item Elimination of redundant equations in \eqref{eq:pSAR_comb2}.
\end{itemize}

The constraint \eqref{eq:comb_conditions} includes partial trace and tensor-product with identity operators. The following two lemmas are useful to rewrite \eqref{eq:comb_conditions} in the Schur-transformed basis.
\begin{lemm}[\cite{RamWenzl1992378,grinko2023gelfandtsetlinbasispartiallytransposed}] \label{lem:gt_basis_partial_trace}
Let \( \lambda \in \hat{\maptp}^{d}_{k} \), \( S, T \in \mathrm{Paths}(\lambda) \), and let \( E_{S, T}^{\lambda \ (k)} \) denote a matrix unit of \( \maptp^{d}_{k} \). Then, the following holds:
\begin{equation}
    \Tr_{p+q} \left[ E_{S, T}^{\lambda \ (k)} \right] = \frac{m_{\lambda}}{m_{\mu}} E_{\overline{S}, \overline{T}}^{\mu \ (k-1)},
\end{equation}
where \( \overline{S} \) denotes the path obtained by removing the last system from \( S \), i.e., \( S = \overline{S} \circ \lambda \).
\end{lemm}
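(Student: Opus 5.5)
We plan to prove the partial-trace formula by working directly with the Gelfand--Tsetlin (GT) type basis of the walled Brauer algebra $\maptp^{d}_{k}$ and the Markov-trace property of the inclusion $\maptp^d_{k-1}\subset\maptp^d_k$. The algebra $\maptp^d_k$ acts on $(\csp^d)^{\otimes p}\otimes(\overline{\csp^d})^{\otimes q}$ with $p+q=k$. It is semisimple in the regime considered, with irreps $\lambda\in\hat{\maptp}^d_k$. The branching from $\maptp^d_{k-1}$ (acting on the first $k-1$ factors) to $\maptp^d_k$ is multiplicity free. Consequently the paths $S\in\mathrm{Paths}(\lambda)$ label an orthonormal basis $\{\ket{S}\}$ of the irrep space. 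In the Schur-transformed picture, $E^{\lambda\,(k)}_{S,T}$ acts as $\id_{m_\lambda}\otimes\ket{S}\bra{T}$ on the $\lambda$-isotypic block, where $m_\lambda$ is the multiplicity, i.e.\ the dimension of the $U(d)$ irrep.

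First we show that the partial trace over the last factor (system $p+q$) maps $\maptp^d_k$ into $\maptp^d_{k-1}$. This is the conditional expectation property of the walled Brauer algebra: tracing the last strand of a diagram closes a loop (giving a factor $d$) or produces a smaller diagram. Moreover $\Tr_{p+q}$ commutes with the $U(d)$ action $U^{\otimes p}\otimes\bar U^{\otimes q}$ restricted to the first $k-1$ factors. Hence $\Tr_{p+q}E^{\lambda\,(k)}_{S,T}$ lies in the commutant of $U(d)$ on $k-1$ factors, and by mixed Schur--Weyl duality it equals $\maptp^d_{k-1}$.

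Next we identify this element. Write $S=\overline S\circ\lambda$ and $T=\overline T\circ\lambda$ with $\overline S\in\mathrm{Paths}(\mu)$ and $\overline T\in\mathrm{Paths}(\mu')$. The operator $E^{\lambda}_{S,T}$ maps the $\mu'$-component of the restriction of $\lambda$ to the $\mu$-component. We have $E^{\lambda}_{S,T}=E^{\lambda}_{S,S}E^{\lambda}_{S,T}E^{\lambda}_{T,T}$, and $E^\lambda_{S,S}\le E^{\mu}_{\overline S,\overline S}\otimes\id$. Because $\Tr_{p+q}$ is a bimodule map over $\maptp^d_{k-1}\otimes\id$, this gives
\[
\Tr_{p+q}E^\lambda_{S,T}=E^\mu_{\overline S,\overline S}\,(\Tr_{p+q}E^\lambda_{S,T})\,E^{\mu'}_{\overline T,\overline T}.
\]
The right-hand side is a scalar multiple of $E^{\mu}_{\overline S,\overline T}$ if $\mu=\mu'$ and vanishes otherwise. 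The statement implicitly takes $S,T$ ending through the same $\mu$; otherwise the result is $0$. The scalar is independent of $\overline S,\overline T$, which follows from conjugating by matrix units of $\maptp^d_{k-1}$.

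Finally we fix the scalar $c$ by taking full traces. The left-hand side gives $\Tr E^\lambda_{S,S}=m_\lambda$ and the right-hand side gives $c\,\Tr E^\mu_{\overline S,\overline S}=c\,m_\mu$, so $c=m_\lambda/m_\mu$.

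The main obstacle is the first step: justifying rigorously that the image of the partial trace lies in $\maptp^d_{k-1}$, and that the matrix units are compatible along paths. We would handle this via unitary invariance plus duality rather than diagram calculus, taking care when $d$ is small and the representation is not faithful. In that case we work with the image algebra, citing \cite{grinko2023gelfandtsetlinbasispartiallytransposed} for the adapted basis.
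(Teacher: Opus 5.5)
The paper gives no proof of this lemma; it imports the result from Ram--Wenzl and Grinko et al. Your argument is correct, and it is essentially the standard conditional-expectation proof behind those references, written out explicitly:
\begin{itemize}
\item $U(d)$-covariance of $\Tr_{p+q}$ together with mixed Schur--Weyl duality puts the image inside $\maptp^d_{k-1}$;
\item the $\maptp^d_{k-1}$-bimodule property sandwiches the image between $E^\mu_{\overline S,\overline S}$ and $E^{\mu'}_{\overline T,\overline T}$;
\item trace preservation fixes the constant at $m_\lambda/m_\mu$. Here $m_\mu>0$ is automatic, since $0\neq E^\lambda_{S,S}\le E^\mu_{\overline S,\overline S}\otimes\id$.
\end{itemize}
You are also right that the statement tacitly requires $\overline S$ and $\overline T$ to end at the same $\mu$, and that the trace vanishes otherwise.

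What your route makes visible, and the citation hides, is the one structural input actually needed. This is the adaptedness of the matrix units to the tower,
\[
E^{\mu}_{X,Y}\otimes\id=\sum_{\lambda:\,\mu\to\lambda}E^{\lambda}_{X\circ\lambda,\,Y\circ\lambda},
\]
which is the paper's Lemma~\ref{lem:gt_basis_tensor_identity}. You use it in two places:
\begin{itemize}
\item Its diagonal part gives $E^\lambda_{S,S}\le E^\mu_{\overline S,\overline S}\otimes\id$.
\item Its off-diagonal part is used, less visibly, when you show the scalar is independent of $(\overline S,\overline T)$ by multiplying with $E^\mu_{\overline S',\overline S}\otimes\id$.
\end{itemize}

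This compatibility cannot be obtained ``via unitary invariance plus duality'', as your last paragraph suggests. It is a property of the phase conventions of the Gelfand--Tsetlin basis. If the path vectors at level $k$ were rephased independently of those at level $k-1$, the formula would hold only up to phases. You should instead take it as the defining property of the adapted basis, citing Grinko et al. or the paper's next lemma. That lemma is justified independently via Clebsch--Gordan identities, so there is no circularity. With that input stated explicitly, your proof is self-contained and complete, including the small-$d$ case where you correctly work with the image algebra.
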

\begin{lemm} \label{lem:gt_basis_tensor_identity}
Let \( \lambda \in \hat{\maptp}^{d}_{k} \), \( S, T \in \mathrm{Paths}(\lambda) \), and let \( E_{S, T}^{\lambda \ (k)} \) denote a matrix unit of \( \maptp^{d}_{k} \). Then, the following holds:
\begin{equation}
    E_{S, T}^{\lambda \ (k)} \otimes \id_{\csp^d} = \sum_{\mu \in \hat{\maptp}^{d}_{k+1}, \, \lambda \to \mu} E_{S \circ \mu , T \circ \mu}^{\mu \ (k+1)},
\end{equation}
where \( \lambda \to \mu \) indicates that \( \lambda \) is connected to \( \mu \) in the Bratteli diagram.
\end{lemm}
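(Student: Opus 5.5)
The plan is to derive the identity from how the Gelfand--Tsetlin (GT) basis of the algebra $\maptp^{d}_{k+1}$ is built recursively from that of $\maptp^{d}_{k}$, following \cite{grinko2023gelfandtsetlinbasispartiallytransposed}. First I will check that $X \mapsto X \otimes \id_{\csp^d}$ is exactly the tower inclusion $\maptp^{d}_{k} \hookrightarrow \maptp^{d}_{k+1}$. A partially transposed permutation on the first $k$ tensor factors becomes, after tensoring with the identity on the new factor, the same diagram with one extra through-line. The new factor is added on the side (primal or dual) dictated by the Bratteli diagram used to define $\hat{\maptp}^{d}_{k+1}$. So the left-hand side is just the image of $E^{\lambda\,(k)}_{S,T}$ under the inclusion, and the claim reduces to a statement about how matrix units restrict along this inclusion.

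Next I will use the mixed Schur--Weyl decomposition of the representation space, $\bigoplus_{\mu} \mathcal{U}_\mu \otimes \mathcal{V}_\mu$, where $\mathcal{U}_\mu$ is the unitary irrep (dimension $m_\mu(d)$, the multiplicity) and $\mathcal{V}_\mu$ is the irrep of $\maptp^{d}_{k+1}$, and argue in three steps.
\begin{enumerate}
\item Restricted to $\maptp^{d}_{k}$, each $\mathcal{V}_\mu$ decomposes without multiplicity as $\bigoplus_{\lambda \to \mu} \mathcal{V}_\lambda$, since the Bratteli diagram has simple edges.
\item The GT vectors $\ket{S \circ \mu}$ with $S \in \mathrm{Paths}(\lambda)$ span exactly the copy of $\mathcal{V}_\lambda$ inside $\mathcal{V}_\mu$. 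Moreover, they are the images of the vectors $\ket{S}$ of $\mathcal{V}_\lambda$ under the intertwiner fixed by the GT construction.
\item The matrix unit $E^{\lambda\,(k)}_{S,T}$ acts as $\ket{S}\bra{T}$ on $\mathcal{V}_\lambda$ and as zero on every $\mathcal{V}_{\lambda'}$ with $\lambda' \neq \lambda$.
\end{enumerate}
Hence on the block $\mathcal{U}_\mu \otimes \mathcal{V}_\mu$ the embedded operator equals $\id_{m_\mu(d)} \otimes \ket{S\circ\mu}\bra{T\circ\mu}$ if $\lambda \to \mu$, and zero otherwise. This is exactly $E^{\mu\,(k+1)}_{S\circ\mu, T\circ\mu}$ in the form $\id_{m_\mu(d)} \otimes \ket{S}\bra{T}$ recalled above. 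Summing over the blocks gives the stated identity.

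I expect the main obstacle to be consistency of conventions rather than any deep step. I must make sure that the phases and normalizations of the GT vectors are chosen so that the intertwiner in the second step sends $\ket{S}$ to $\ket{S\circ\mu}$ with coefficient exactly $1$. The standard recursive definition (seminormal or orthonormal GT basis) guarantees this, and it is the same convention under which Lemma~\ref{lem:gt_basis_partial_trace} produces the factor $m_\lambda/m_\mu$.

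I must also handle diagrams $\mu$ with $\lambda \to \mu$ that fail to appear for small $d$, i.e.\ $m_\mu(d)=0$. I would restrict the sum to $\mu \in \hat{\maptp}^{d}_{k+1}$, as in the statement. Those $\mu$ contribute zero operators anyway, so the identity holds at the level of operators on the tensor space. If time permits, an independent check is to take the partial trace of both sides and compare with Lemma~\ref{lem:gt_basis_partial_trace}. This uses the dimension identity $d\, m_\lambda = \sum_{\lambda\to\mu} m_\mu$.
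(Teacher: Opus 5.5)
Your argument is correct, but it proves the identity on the algebra side. The paper only says the lemma follows from Clebsch--Gordan identities, which is an argument on the unitary-group side.

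\textbf{The paper's route.} One writes the level-$(k+1)$ Schur basis as $\ket{\mu,M,S\circ\mu}=\sum_{M',j}C^{\mu M}_{\lambda M',j}\ket{\lambda,M',S}\otimes\ket{j}$. Here $M,M'$ label bases of the $U(d)$ irreps, and the coupling coefficients do not depend on the path $S$. One then expands $E^{\lambda\,(k)}_{S,T}\otimes\id_{\csp^d}=\sum_{M',j}\ket{\lambda,M',S}\bra{\lambda,M',T}\otimes\ket{j}\bra{j}$. Finally one resums using unitarity of the Clebsch--Gordan transform $\mathcal{U}_\lambda\otimes\csp^d\cong\bigoplus_{\lambda\to\mu}\mathcal{U}_\mu$ (or its analogue with a dual factor).

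\textbf{Your route.} You combine three ingredients: the tower inclusion, Schur's lemma (which gives the block form $\id_{m_\mu(d)}\otimes\rho_\mu(\cdot)$), and multiplicity-free branching with a path-adapted basis.

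\textbf{What each buys.} Your route never touches the $U(d)$ bases, and it applies verbatim to any semisimple tower with a simple Bratteli diagram. It also isolates the one convention the lemma actually depends on: the branching intertwiner must send $\ket{S}$ to $\ket{S\circ\mu}$ with an $S$-independent coefficient. Diagonal matrix units are insensitive to this, but off-diagonal ones would otherwise pick up relative phases. The Clebsch--Gordan route gets that convention for free, because the Schur transform is built by sequential coupling with path-independent coefficients. It also verifies the identity directly for the concrete Schur-transform matrices used in the SDP. The price is bookkeeping over the $U(d)$ labels.
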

\noindent Lemma \ref{lem:gt_basis_tensor_identity} can be shown using identities of Clebsch-Gordan coefficients.
Using these lemmas, the constraint \eqref{eq:comb_conditions} is written in terms of linear transformations of matrices \eqref{eq:matrixC} and \eqref{eq:matrixA}.

The final constraint to consider is \eqref{eq:pSAR_comb2}, which defines the probability $p$. Using the matrices $A^{(\lambda_0,\ldots,\lambda_{2K+1})}$ this constraint is given by
\begin{equation}
    \label{eq:pSAR_condition_matrix}    \left( \sum_{\lambda_l \in \hat{\maptp}^{d_l}_{N,1}} \right)_{l=0,\dots,2K+1} 
    A^{(\lambda_0,\ldots,\lambda_{2K+1})} \bullet F^{(\lambda_0,\ldots,\lambda_{2K+1})} = p~ C_{\widetilde{\iden}},
\end{equation}
where $\bullet$ represents the summation of element-wise products $A \bullet F = \sum_{ij} [A]_{ij}[F]_{ij}$, and $F^{(\lambda_0,\ldots,\lambda_{2K+1})}$ is an operator-valued matrix defined by
\begin{equation}
    F^{(\lambda_0,\ldots,\lambda_{2K+1})}_{(S_0,\ldots,S_{2K+1}), (T_0,\ldots,T_{2K+1})} := \left( \bigotimes_{l=0}^{2K+1} E_{S_l, T_l}^{\lambda_l} \right) \star C_{\widetilde{\iden}}^{\otimes N}.
\end{equation}

The constraint \eqref{eq:pSAR_condition_matrix} is a $D \times D$ dimensional matrix equation at first glance. However, it is reduced to fewer equations by exploiting the symmetry. For brevity, we consider $1$-slot superchannels ($K=1$). For this purpose, we decompose each port of the identity superchannel into subsystems $I_i$ and $O_i$ for $i=1,2,3$, as shown in Figure~\ref{fig:identity_superchannel} (a).
\begin{figure}
    \centering
    \includegraphics[width=0.7\linewidth]{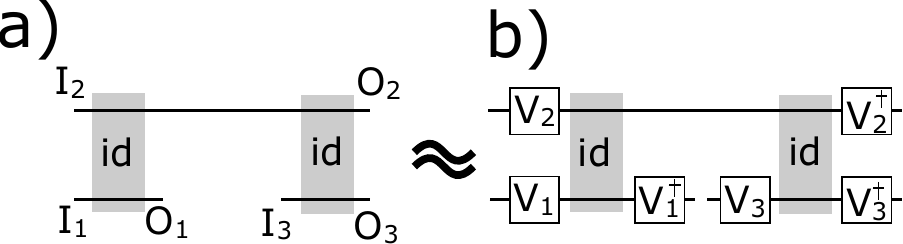}
    \caption{(a) The ($1$-slot) identity superchannel, and (b) a collective unitary transformation that keeps the identity superchannel invariant.}
    \label{fig:identity_superchannel}
\end{figure}
Specifically, applying unitary $V_i$ on $I_i$ and $V_i^\dagger$ on $O_i$ simultaneously but for each $i$ leaves the identity superchannel invariant, as shown in Figure~\ref{fig:identity_superchannel} (b). We represent this invariance in terms of combs as
\begin{equation}
    C_{\map{V}_{O_i}^\dagger \circ \widetilde{\iden} \circ \map{V}_{I_i}} = C_{\widetilde{\iden}}.
\end{equation}
We have
\begin{align}
    F^{(\lambda_0,\ldots,\lambda_{2K+1})}_{(S_0,\ldots,S_{2K+1}), (T_0,\ldots,T_{2K+1})} &:= \left( \bigotimes_{l=0}^{2K+1} E_{S_l, T_l}^{\lambda_l} \right) \star C_{\widetilde{\iden}}^{\otimes N} = \left( \bigotimes_{l=0}^{2K+1} E_{S_l, T_l}^{\lambda_l} \right) \star C_{\map{V}_{O_i}^\dagger \circ \widetilde{\iden} \circ \map{V}_{I_i}}^{\otimes N} \\
    &= (V_{I_i} \otimes \bar{V}_{O_i})^{\otimes N} \left( \bigotimes_{l=0}^{2K+1} E_{S_l, T_l}^{\lambda_l} \right) (V_{I_i} \otimes \bar{V}_{O_i})^{\dagger \otimes N} \star C_{\widetilde{\iden}}^{\otimes N} \\
    &= (V_{I_i} \otimes \bar{V}_{O_i})^\top_\mathrm{R} \left( \bigotimes_{l=0}^{2K+1} E_{S_l, T_l}^{\lambda_l} \right) \overline{(V_{I_i} \otimes \bar{V}_{O_i})}_\mathrm{R} \star C_{\widetilde{\iden}}^{\otimes N} \\
    &= (V_{I_i} \otimes \bar{V}_{O_i})^\top_\mathrm{R} F^{(\lambda_0,\ldots,\lambda_{2K+1})}_{(S_0,\ldots,S_{2K+1}), (T_0,\ldots,T_{2K+1})} \overline{(V_{I_i} \otimes \bar{V}_{O_i})}_\mathrm{R},
\end{align}
where the third line follows from the fact that the matrix units commute with $V^{\otimes N} \otimes \bar{V}$ by definition. We arrive at the symmetry
\begin{equation}
    [F^{(\lambda_0,\ldots,\lambda_{2K+1})}_{(S_0,\ldots,S_{2K+1}), (T_0,\ldots,T_{2K+1})}, V_{I_i} \otimes \bar{V}_{O_i} ] =0, \qquad (\forall \mathrm{unitary}~V, ~i=1,2,3),
\end{equation}
which holds for all elements of $F^{(\lambda_0,\ldots,\lambda_{2K+1})}$. As a result, all the elements of $F^{(\lambda_0,\ldots,\lambda_{2K+1})}$ can be represented as a linear combination of 3-tensored matrix units from \( \maptp^{d_i}_{1,1} \). Since there are only two independent matrix units for \( \maptp^{d_i}_{1,1} \), the number of independent equations in the constraint \eqref{eq:pSAR_condition_matrix} reduces to $2^3=8$.

\end{document}